\newcommand{\bea}{\begin{eqnarray}}
\newcommand{\ena}{\end{eqnarray}}
\newcommand{\bean}{\begin{eqnarray*}}
\newcommand{\enan}{\end{eqnarray*}}
\newtheorem{theorem}{Theorem}
\newtheorem{lemma}{Lemma}
\newtheorem{proposition}{Proposition}
\newtheorem{remark}{Remark}
\newtheorem{definition}{Definition}
\begin{document}

\title{Nonlinear electrodynamics as a symmetric hyperbolic system}
\author{Fernando Abalos${}^{1}$}
\email{jfera18@gmail.com}
\author{Federico Carrasco${}^{1}$}
\email{fedecarrasco@gmail.com}
\author{\'Erico Goulart${}^{2}$}
\email{egoulart2@gmail.com}
\author{Oscar Reula${}^{1}$}
\email{reula@famaf.unc.edu.ar}

\affiliation{${}^{1}$ Facultad de Matem\'atica, Astronom\'\i{}a y F\'\i{}sica, Universidad Nacional de C\'ordoba and IFEG-CONICET, Ciudad Universitaria, X5016LAE C\'ordoba, Argentina }
\affiliation{${}^{2}$ CAPES Foundation, Ministry of Education, Brasilia, Distrito Federal, 70.040-020, Brazil}
%\email{${}^{1}$jfera18@gmail.com}

\begin{abstract}

Nonlinear theories generalizing Maxwell's electromagnetism and arising from a Lagrangian formalism have dispersion relations in which propagation planes factor into null planes corresponding to two effective metrics which depend on the point-wise values of the electromagnetic field. 
These effective Lorentzian metrics share the null (generically two) directions of the electromagnetic field. 
We show that, the theory is symmetric hyperbolic if and only if the cones these metrics give rise to have a non-empty intersection. Namely that there exist families of symmetrizers in the sense of Geroch \cite{Geroch} which are positive definite for all covectors in the interior of the cones intersection. Thus, for these theories, the initial value problem is well-posed. We illustrate the power of this approach with several nonlinear models of physical interest such as Born-Infeld, Gauss-Bonnet and Euler-Heisenberg.

\end{abstract}

%02.40.Ky Riemannian geometries
%03.50.De Classical electromagnetism, Maxwell equations
%03.50.Kk Other special classical field theories
%04.20.Cv Fundamental problems and general formalism (in General Relativity) 

\maketitle

% \begin{footnotesize}
% \tableofcontents
% \end{footnotesize}

%%%%%%%%%%%%%%%%%%%%%%%%%%%%%%%%%%%%%%%%%%%%%%%%%%%%%%%%%%%%%%
%%%%%%%%%%%%%%%%%%%%%%%%%%%%%%%%%%%%%%%%%%%%%%%%%%%%%%%%%%%%%%
%%%%%%%%%%%%%%%%%%%%%%%%%%%%%%%%%%%%%%%%%%%%%%%%%%%%%%%%%%%%%%
%%%%%%%%%%%%%%%%%%%%%%%%%%%%%%%%%%%%%%%%%%%%%%%%%%%%%%%%%%%%%%

\section{Introduction}

Nonlinear electrodynamics (NLED) is relevant in several areas of physics. In QED, the polarization of the vacuum leads naturally to nonlinear effects (such as the light-light scattering) which are effectively described by Euler-Heisenberg's Lagrangian \cite{Eul}, \cite{Heis}, \cite{Karp}, \cite{Schwin}, \cite{Blau}, \cite{Soldati} (see also \cite{Dunne} for a pedagogical review). In some dielectrics and crystals, the interaction between the molecules and external electromagnetic fields can be described by an effective nonlinear theory, which is typically observed at very high light intensities such as those provided by pulsed lasers \cite{Shen1984}, \cite{Born1999}. Possible consequences of NLED have been explored also in cosmology and astrophysics. In particular, it is believed that nonlinearities may play important roles in the description of the dark sector of the universe \cite{breton2010nonlinear}, \cite{de2002nonlinear}, \cite{novello2004nonlinear}, \cite{novello2007cosmological}, \cite{garcia2014no}, \cite{montiel2014parameter}, in the avoidance of singularities (when coupled to Einstein's equations) or in the physics of charged black holes, \cite{plebanski1984type}, \cite{breton2003born},\cite{chemissany2008thermodynamics}, \cite{gunasekaran2012extended}, \cite{ayon1998regular}, \cite{breton2015thermodynamical}. These models have the advantage of using only electromagnetic fields, without invoking yet unobservable scalars or more speculative ideas related to higher-dimensions and brane worlds. Finally, Born-Infeld nonlinear model \cite{Born_1, Born_2, Born_3} has mathematical connections to string theory, for its Lagrangian appears in relation with the gauge fields on a D-brane (see, for instance, \cite{Gib}). In general, NLED theories attracts attention because they offer insights into light propagation in the experimental and theoretical studies of relativity.

%\cite{Schar, Latorre}

A key feature to elucidate about the partial differential equations (PDEs) governing NLED is whether they pose an initial value formulation. Well-posedness is at the roots of physics, for it amounts to the predictability power of the theory, asserting that solutions exist, are unique and depend continuously on the initial data. The mathematical theory dealing with the initial value formulation is well developed and for this case it amounts to check whether the first order system of quasilinear PDEs is hyperbolic and what are the maximal propagation speeds. Roughly, hyperbolicity is an algebraic property of the principal symbol (the differential operator consisting of the highest derivative order terms of the PDE), which is essential to prove local well-posedness of the non-characteristic Cauchy problem. In theories like NLED the analysis of hyperbolicity needs, however, a careful manipulation. Because constraints are present, the evolution equations are not uniquely defined for one can add to any one of them 
constraint terms and obtain another equivalent evolution system. The distinction of an evolution system over the others generally implies a choice of $3+1$ decomposition of space-time and, as such, breaks covariance.

In this paper we shall treat the equivalent evolution systems in an equal footing so as to assert the desired hyperbolicity properties keeping covariance as much as we can. The tools to deal with hyperbolicity in this way has been provided by Geroch in \cite{Geroch}. However, as there are many different notions of hyperbolicity available in the literature (see, e.g. \cite{courant1966methods}, \cite{lax1}, \cite{lars1997lectures}), we now recall some previous results concerning the evolutionary aspects of NLED. In \cite{brenier2004hydrodynamic}, for instance, Brenier uses the energy density and the Poynting vector as additional unknowns to augment the original $6\times 6$ Born-Infeld system to a system of $10\times 10$ hyperbolic conservation laws (for which well known results can be directly applied \cite{lax1973hyperbolic}, \cite{majda2012compressible}, \cite{serre1999systems}). A similar analysis using convex entropies (actually the energy) may be found in \cite{Serr}, where Serres extends a method 
designed by C. 
Dafermos  \cite{dafermos2010hyperbolic} (see also \cite{demoulini2001variational}) to the class of models described by Coleman \& Dill \cite{coleman1971thermodynamic}. As a result, he shows that the polyconvexity of the energy density implies the local well-posedness of the Cauchy problem within smooth functions of class $H^{s}$ with $s>1+d/2$. Another approach using a $3+1$ splitting was provided by V. Perlick in \cite{Volker}, assuming that the constitutive equations $\mathcal{F}_{A}(H,F)=0$ can be solved for $\vec{E}$ and $\vec{H}$. Several results that are relevant for the question of whether the evolution equations are hyperbolic, strongly hyperbolic or symmetric hyperbolic are investigated in details. It is worth mentioning also the global results by J. Speck \cite{speck2012nonlinear}. Using ideas presented by Christodoulou and Klainerman \cite{christodoulou2000action}, \cite{christodoulou1990asymptotic} he establishes the existence of small-data global solutions to the Born-Infeld system on the 
Minkowski space background in 1 + 3 dimensions.
Roughly, he concludes that if the initial data for the Born-Infeld equations are sufficiently small as measured by a weighted Sobolev norm, then these data launch a unique classical
solution to the equations existing in \textit{all} of Minkowski space. Furthermore, he shows that these solutions decay at exactly the same rates as solutions to the linear Maxwell system.
%In particular, he gives a convenient characterization of constitutive laws for which the evolution equations are symmetric hyperbolic.
%In this special case we have not only continuous dependence of the solution on the data but in addition we can control the growth of the solution in terms of energy inequalities.

Here, we adopt a different strategy. Working within Geroch's geometrical formalism we find the most general hyperbolizations NLED theories admit and show that they're parametrized by an auxiliary vector field $t^{q}(x)$ (as is also the case for linear electrodynamics). This construction allows us to find necessary and sufficient conditions theories and fields must satisfy in order to have a well-posed initial value formulation. It happens, such condition translates very nicely into geometrical terms: the system is symmetric hyperbolic, if and only if, the two cones arising from the dispersion relations (or, conversely, the characteristic surfaces) of the given NLED theory have a non-empty intersection. 
% * <jfera18@gmail.com> 2015-06-11T21:34:16.282Z:
%
%  Deberiamos resaltar que no pedimos ninguna condicion de energia, ni de convexidad o entropia como en otros trabajos.
%
This constitutes one of the main results of the present article. Our construction also allows to find the causality cone (the maximal propagation speeds) each hyperbolization have. These cones could be, in principle, different from the physical cones as defined by the dispersion relations, for the latter ones use all equations, including the constraints, to assert the propagation velocities 
of plane waves, while the hyperbolization cones use only the corresponding evolution equations. We find that, nevertheless, they coincide. 

\textbf{A warning on terminology}: In the present context, cones appear in several related disguises: First, we have the familiar cones arising in Lorentzian geometry, namely those arising from the set of all time-like vectors $\{ v \in T_{p}\textbf{M} |  \text{  } g_{ab}v^a v^b > 0\}$, which splits into two disjoint sets, the ``future" and ``past" propagation cones. We shall reffer to them as the cones of a given metric; 
Second, now allowing for more general symmetric-hyperbolic systems, the cones that appear as the set of co-vectors which make positive definite a certain symmetric hyperbolizer, $C^*_H:=\{n_a \in T^{*}_{p}\textbf{M} |  \text{  } H_{\alpha \beta}^a n_a > 0\}$. These cones are clearly open and convex, for if $n_a$ and $n'_a$ belong to $C^*_H$ so does $\lambda_1 n_a + \lambda_2 n'_a$ for all positive $\lambda_1$, $\lambda_2$, since the sum of positive bi-linear forms  gives another  positive bi-linear form. They represent planes on the tangent space. Once a plane is found within $C^*_H$, the others are found tilting it until $H_{\alpha \beta}^a n_a$ gets a kernel.
Each one of these planes represents a plane wave perturbation solution to the underlying equations out of which the symmetrizer was built upon; 
Third, we can construct the co-cones, duals to the previous ones, that is, given a cone, $C \in V$ we can define a cone in the dual space, $C':= \{\sigma \in V' | \text{  } \sigma(v) \geq 0 \text{ ,  }\forall v \in C\}$, this is clearly also a convex cone. For the cones defined from hyperbolizers these cones are called the \textsl{propagation cones}, for they determine the directions along which perturbations propagate. 
Notice that the co-cones of a given metric cone are precisely those covectors obtained by lowering  with the metric the indices of the all the vectors forming the cones, so the distinction is lost.
However, since in this work there appear several Lorentzian metrics it is best to keep cones and co-cones as separate geometrical entities. \\ \\
This article is organized as follows: In Section II, we introduce NLED equations and point out some of their basic structural properties. 
We then review some important results concerning characteristic surfaces and deepen on the geometrical aspects these theories exhibit; 
In section III, an intrinsic geometrical formulation of PDEs due to Geroch is introduced and the notion of symmetric hyperbolicity is presented. 
We then investigate the algebraic core of the NLED equations and find the most general hyperbolizations these theories allow. Our main results are presented in this section in the form of two theorems, together with (we hope) comprehensive descriptions of the key ideas behind the proofs; In section IV, we provide the reader with the detailed steps to prove the theorems and thus while Section V deals with the constraints; Finally, Section VI explores our results for some particular realizations of the Lagrangian.

%In section IV, we study under which specific conditions these hyperbolizations shield a symmetric hyperbolic system of equations. We find the condition to be very simple and geometrical,
%depending only on a simple property of the dispersion relation cones, namely the system is symmetric hyperbolic whenever they have a common intersection. 
%In particular it does not depende whether they are time like or space-like with respect to the underlying space-time metric. 

%%%%%%%%%%%%%%%%%%%%%%%%%%%%%%%%%%%%%%%%%%%%%%%%%%%%%%%%%%%%%%%%%%%%%%%%%%%%%%%%%%%%%%%%%%%%%%%%%%%%%%%%%%%%%%%%%%%%%%%%%%%%%%%%%%%%%%%%%%%
%%%%%%%%%%%%%%%%%%%%%%%%%%%%%%%%%%%%%%%%%%%%%%%%%%%%%%%%%%%%%%%%%%%%%%%%%%%%%%%%%%%%%%%%%%%%%%%%%%%%%%%%%%%%%%%%%%%%%% NLE: general remarks
%%%%%%%%%%%%%%%%%%%%%%%%%%%%%%%%%%%%%%%%%%%%%%%%%%%%%%%%%%%%%%%%%%%%%%%%%%%%%%%%%%%%%%%%%%%%%%%%%%%%%%%%%%%%%%%%%%%%%%%%%%%%%%%%%%%%%%%%%%%

\section{General Remarks}

\subsection{Lagrangians and equations of motion}
Write $(\textbf{M},\ g)$ for a (1+3)-dimensional space-time, with signature convention $(+,-,-,-)$, and $F_{ab}$ for the electromagnetic 2-form. Let
\begin{equation}
F := F^{ab}F_{ab}=2(H^2-E^2)\quad\quad\quad G := F^{ab}\stackrel{\ast}{F_{ab}}= 4\vec{E}.\vec{H}
\end{equation}
denote the field invariants, where $\stackrel{\ast}{F_{ab}}=\frac{1}{2} \eta_{abcd} F^{cd}$ stands for the dual, $\eta_{abcd}$ the completely antisymmetric Levi-Civita tensor and $(\vec{E},\vec{H})$ the electromagnetic fields. We consider nonlinear models of electrodynamics in vacuum provided by the action,
\begin{equation}\label{action}
S=\int\mathcal{L}(F,G)\sqrt{-g}\ d^{4}x,
\end{equation}
where the Lagrangian density $\mathcal{L}(F,G)$ is an arbitrary smooth function of the invariants and $g:={det}(g_{ab})$. The first-order, quasi-linear,  equations read as
\begin{equation}\label{ql}
\nabla_{a}\left(\mathcal{L}_{F}F^{ab}+\mathcal{L}_{G}\stackrel{\ast}{F^{ab}}\right)=0,\quad\quad\quad\nabla_{[a}{F_{bc]}}=0,
\end{equation}
with $\nabla$ denoting covariant derivative and $\mathcal{L}_{X}:=\frac{\partial \mathcal{L}}{\partial X}$, for conciseness. Here, the l.h.s system is obtained via the variational principle while the r.h.s system is assumed from the very beginning, thus guaranteeing the existence of a four-potential such that $F_{ab}=\nabla_{[a}A_{b]}$. We define also, for future convenience, the quantities:
\begin{equation}
\xi_{1}:= 2\mathcal{L}_{FF}/\mathcal{L}_{F},\quad\quad\xi_{2}:= 2\mathcal{L}_{FG}/\mathcal{L}_{F},\quad\quad\xi_{3}:= 2\mathcal{L}_{GG}/\mathcal{L}_{F}.
\end{equation} 

\begin{remark}
In Maxwell's electrodynamics, there exists a particular gauge which considerably simplifies the second order equations for $A_{a}(x)$. Unfortunately, this is not the case for more general nonlinear theories and it will be convenient to focus our analysis on the first order equations \eqref{ql} only. 
Also, currents $j_{a}(x)$ are irrelevant for the question of whether the initial value problem is well-posed. Therefore, they'll play no role in our further discussion.
\end{remark}

%\footnote{$A_{(a}B_{b)}\equiv\frac{1}{2}(A_{a}B_{b}+A_{b}B_{a})$ and $A_{[a}B_{b]}\equiv \frac{1}{2}(A_{a}B_{b}-A_{b}B_{a})$ for symmetrization and anti-symmetrization, respectively.}. 
 
\subsection{Dispersion relations, characteristics and effective metrics}
 
It is well known that high-frequency perturbations about a smooth background solution of Eqs. \eqref{ql} are controlled by two \textit{effective metrics}\footnote{See \cite{holes2002m} and \cite{barcelo2005analogue} for a wealth of details in the context of analogue models of gravity.}. The core of this result was first presented by Boillat \cite{Boi_1} and Plebanski \cite{Pleb} in the early 70's using Hadamard's method of discontinuities \cite{hadamard1903lecons}. More recently, Obukhov and Rubilar \cite{Obu} noticed that quasi-linear PDE's of the form \eqref{ql} are particular instances of electrodynamics inside media described by general nonlinear constitutive laws. As a consequence, they showed that if $\Sigma$ is a characteristic hypersurface described by $f(x^{a})=const$, the \textit{wave normals} $k_{m}:=\partial_{m}f$ are given by the vanishing sets of a fourth-order multivariate polynomial in the cotangent bundle $T^{*}\textbf{M}$
\begin{equation}\label{multiv}
\mathcal{P}^{*}(x,k):=\mathfrak{G}^{abcd}(x)\ k_{a}k_{b}k_{c}k_{d}=0.
\end{equation}
Here, $\mathfrak{G}^{abcd}(x)$ is a completely symmetric quantity (35 independent components) depending implicitly on the background solution. Physically, (\ref{multiv}) plays the role of a dispersion relation for the linearized waves and gives rise to some sort of \textit{covariant Fresnel equation} \cite{obukhov2000wave}, \cite{rubilar2002generally}.

For the class of nonlinear Lagrangian models provided by (\ref{action}) a remarkable property holds due to algebraic conditions: the multivariate polynomial (\ref{multiv}) always reduces to the simpler form
\begin{equation} \label{P}
\mathcal{P}^{*}\left(x,k\right) =ak^{4}+Qk^{2}l^{2}+Rl^{4}
\end{equation}
with $k^{2}=g^{ab}k_{a}k_{b}$, $l^{2}=F_{~c}^{a}F^{bc}k_{a}k_{b}$, for conciseness, and 
\begin{eqnarray}
a:&=& \left(1+\xi_{2}G-\xi_{3}F-RG^{2}/16\right),\\
Q:&=& 2\left(\xi_{1}+\xi_{3}-RF/4\right),\\
R:&=&4\left(\xi_{1}\xi_{3}-\xi_{2}^{2}\right).
\end{eqnarray}
A closer inspection of (\ref{P}) reveals that the quartic Fresnel surface of the wave normals factorizes to the product of two second order surfaces, given in terms of the quadratic forms 
\begin{equation}\label{quadratic}
\mathfrak{g}^{ab}_{1}(x)k_{a}k_{b}=0,\quad\quad\quad\quad \mathfrak{g}^{ab}_{2}(x)k_{a}k_{b}=0,
\end{equation}
with the \textit{reciprocal effective metrics} given by
\begin{eqnarray}
\mathfrak{g}_{1}^{ab}:&=&ag^{ab}+b_{1}F^{a}_{\phantom a c}F^{bc},\\
\mathfrak{g}_{2}^{ab}:&=&g^{ab}+(b_{2}/a)F^{a}_{\phantom a c}F^{bc}, 
\label{effective_met}
\end{eqnarray}
according to the following definitions
%%%%%%%%%%%%%%%%%%%%%%%%%%%%%
\footnote{If $a \to 0$ but $b_1 \neq 0$ the left hand side of (\ref{effective_met}) is still finite since $b_2/a = R/b_1$.
When $a \to 0$ and $b_1 \to 0$ simultaneously we can, using the conformal freedom, redefine the metrics as $\mathfrak{\tilde{g}}_{1}^{ab}:=(1/b_{1})\mathfrak{g}_{1}^{ab},\  \mathfrak{\tilde{g}}_{2}^{ab}:=b_{1}\tilde{g}_{2}^{ab}$. The results we shall obtain do not depend on this reparametrization.
\label{foot}}

%%%%%%%%%%%%%%%%%%%%%%%%%%%%%
\begin{equation}
b_{1}:=\frac{Q+\sqrt{\Delta}}{2},\quad\quad
b_{2}:=\frac{Q-\sqrt{\Delta}}{2},\quad\quad\Delta:=Q^{2}-4aR.
\end{equation}
As stressed by Boillat \cite{Boi_1}, the roots always exist since the discriminant is actually a sum of squares, i.e. $\Delta=4(N_{1}^{2}+N_{2}^{2})$ with
\begin{equation}
N_{1}:=(\xi_{1}-\xi_{3})-RF/4,\quad\quad\quad N_{2}:=2\xi_{2}-RG/4.
\end{equation}

Obukhov and Rubilar proceed by showing us that $\mathfrak{g}^{ab}_{1}$ and $\mathfrak{g}^{ab}_{2}$ are Lorentzian whenever the background spacetime metric is Lorentzian. Their results are in qualitative agreement with previous results by Novello \textit{et al} \cite{novello2000geometrical}, \cite{de2000light} wherein the rays spanning the characteristic surfaces are often described in terms of \textit{effective null geodesics}. Therefore, in what follows we shall assume that the effective metrics are always Lorentzian\footnote{A degenerate metric does not give rise to a cone according to our definition. Therefore, we analyze them separately in appendix B, where we conclude those systems are not symmetric hyperbolic.}. For the covariant components of the latter we write $\mathfrak{g}^{1}_{ab}$ and $\mathfrak{g}^{2}_{ab}$. They're such that $\mathfrak{g}_{1}^{ac}\mathfrak{g}_{cb}^{1}=\delta^{a}_{\phantom ab}$ and $\mathfrak{g}_{2}^{ac}\mathfrak{g}_{cb}^{2}=\delta^{a}_{\phantom ab}$. 

%At a given spacetime point $p$ these relations determine the locus of allowed wave normals in the cotangent space $T_{p}^{*}\textbf{M}$.

%What about the allowed characteristic surfaces at p? If $\mathfrak{g}^{ab}_{1}$ and $\mathfrak{g}^{ab}_{2}$
%Thus, wave propagation will generally depend on space-time location, direction of propagation and wave polarization.

%In general, the effective metrics are Lorentzian. This feature of nonlinear theories led to the construction of analog models of gravity, which imitate the kinematical properties of gravitational fields 
%(see \cite{visser} for a review). 

%As a consequence, all allowed frequencies are real for all $\vec{k}$ in $\mathbb{R}^{3}$, guaranteeing that the unrestricted existence of approximate-plane-wave solutions 
%exists under reasonable assumptions. However, this property is too weak to guarantee well-posedness of quasilinear PDE's to which our NLE equations belong.  

\subsection{Geometrical Structure: the cones}

The effective metrics are defined up to a conformal transformation. However, there are on them an intrinsically geometric property which does not depend on conformal redefinitions, namely their cones \footnote{By a cone we mean the interior of a proper cone. 
Recall that a proper cone $C$ is a subset of a vector space $V$ such that: $(\alpha \mathbf{u} + \beta \mathbf{v}) \in V \;\;\; 
\forall \text{  } \mathbf{u},\mathbf{v} \in V  \;\;\; \forall \text{  } \alpha, \beta >0$ and $\bar{C} \cap -\bar{C} = \emptyset$.}.
%%%%%%%%%%%%%%%%%%%%%%%%%%%%%%%%%%%%%%%%%%%%%%%%%%%%%%%%%%%%%%
%This structure is, of course, associated to any Lorentzian metric and it is ultimately related to the existence of null directions and to the notion of causality.
We will denote by $C_{\mathfrak{g}_{i}}$,  the set of all timelike (future-directed)~\footnote{For symmetric hyperbolic systems once a symmetrizer is given we will adopt for all effective metrics the convention that the future cones are those with a non-vanishing intersection with the corresponding propagation cone. In the case the system is not symmetric hyperbolic we shall take into account all cones, two for each metric, and refer to all of them as cones.} vectors with respect to the metrics $\mathfrak{g}_{i}$, 
%(independently of the signature, which happends to be non-fixed for the effective metrics as we have defined them),
and $C^{*}_{\mathfrak{g}_{i}} := int \left( \{ n_a \in T^{*}_p \textbf{M} | \text{  } n_a v^a > 0 \text{ ,  }\forall v^a \in C_{\mathfrak{g}_{i}}\} \right)$
%%%%%%%%%%%%%%%%%%%%%%%%%%%%%%%%%%%%%%%%%%%%%%
\footnote{We choose the interior as to make the dual cones open, and thus, put them in an equal footing with the cones  $C_{\mathfrak{g}_{i}}$.} 
%%%%%%%%%%%%%%%%%%%%%%%%%%%%%%%%%%%%%%%%%%%%%%
the dual of these cones, or co-cones.
We study here some geometrical relations among these cones, since they'll play an important role in the description of hyperbolicity. In order to do that, we distinguish between two different cases, according to the nature of the electromagnetic field $F_{ab}$: i. non-degenerate fields ($F^{2}+G^{2}\neq 0$); ii. degenerate fields ($F=G=0$). As these cases have different geometric interpretations, we shall treat them separately (see also \cite{de2009classification}).

%%%%%%%%%%%%%%%%%%%%%%%%%%%%%%%%%%%%%%%%%%%%%%%%%%%%%%%%%%%%%%%%%%%%%%%%%%%%%%%%%%%%%%%%%%%%%%%%%%%%%%%%%%%%%%%%%%%%%%%%%%%%%%%%%%% NON DEG

\subsubsection{Non-degenerate $F_{ab}$}

A non-degenerate 2-form at a point $p$, has (at that point) two null eigenvectors $k^a$ and $l^a$ (see appendix A for further details).
The directions of these vectors are called the \textit{principal null directions (PND's)} \cite{penrose, pirani, wald2010general} and will play a key role in our description.
\begin{remark}
$k^a$ and $l^a$ are null with respect to the three metrics, that is, the background and the effective metrics. We shall say henceforth that the closures of all cones share these principal null directions. The plane spanned by $k^{a}$ and $l^{a}$ form the essential geometrical structure of the system.%, and it will play a fundamental role on the discussions that follow.
\end{remark}
The PND's help us to construct a frame which simplifies calculations considerably.
In this frame, $\vec{E}\parallel\vec{H}$, $g_{ab}$ reduces to $\eta_{ab}$ and one obtains the quadratic forms 
\begin{equation}
\mathfrak{g}^{i}_{ab}t^{a}t^{b} :=\alpha _{i}^{-1}\left( t_{0}^{2}-t_{3}^{2}\right)
-\beta _{i}^{-1}\left( t_{1}^{2}+t_{2}^{2}\right) 
\qquad \mathfrak{g}_{i}^{ab}n_{a}n_{b}:=\alpha _{i}\left( n_{0}^{2}-n_{3}^{2}\right) -\beta
_{i}\left( n_{1}^{2}+n_{2}^{2}\right)
\label{gab_1}
\end{equation}
with $i=1,2$ and
\begin{equation}
\begin{array}{ccc}
\alpha _{1}:=a+b_{1}\frac{1}{4}\left( F-\sqrt{F^{2}+G^{2}}\right) &  & \beta
_{1}:=a+b_{1}\frac{1}{4}\left( F+\sqrt{F^{2}+G^{2}}\right) \\ 
\alpha _{2}:=1+\frac{b_{2}}{a}\frac{1}{4}\left( F-\sqrt{F^{2}+G^{2}}\right)
&  & \beta _{2}:=1+\frac{b_{2}}{a}\frac{1}{4}\left( F+\sqrt{F^{2}+G^{2}}%
\right)%
\end{array}
\label{alfa_beta1}
\end{equation}
Note that the quantities involved in \eqref{alfa_beta1} are covariantly defined, i.e. they are functions of the invariants $F$ and $G$.

\begin{figure}[!ht]
    \subfloat[$\Omega _{1}>0$ and $\Omega _{2}>0$ \label{cones:a}]{%
      \includegraphics[width=0.25\textwidth]{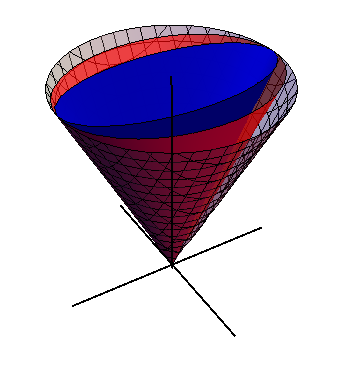}
    }
    \hfill
    \subfloat[$\Omega _{1}>0$ and $\Omega _{2}<0$ \label{cones:b}]{%
      \includegraphics[width=0.25\textwidth]{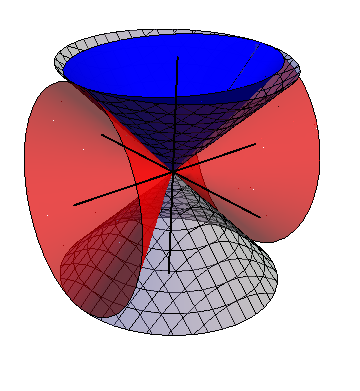}
    }
    \hfill
    \subfloat[$\Omega _{1}<0$ and $\Omega _{2}<0$ \label{cones:c}]{%
      \includegraphics[width=0.25\textwidth]{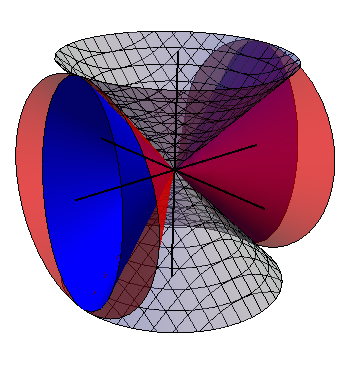}
    }
    \caption{Non-degenerate case. Possible configurations for the null surfaces of the metrics:\\ $\mathfrak{g}^{1}_{ab}$ (red); $\mathfrak{g}^{2}_{ab}$ (blue); background metric (meshed gray).}
    \label{fig:cones}
  \end{figure}
A closer inspection of \eqref{gab_1} reveals that the effective cones are characterized by the signs of the coefficients \eqref{alfa_beta1}. We introduce to this matter, two useful quantities
\begin{equation}
 \Omega_{i}:=\alpha _{i}\beta _{i} \qquad \text{(for } i=1,2 \text{  )}
\end{equation}
%which are basically the square root of the metrics determinant in each case.\\
and present a table illustrating all possible sign combinations,
\begin{scriptsize}
\[
\begin{array}{ccccccc}
& \text{sign}\left( \alpha _{i}\right) & \text{sign}\left( \beta _{i}\right)
& \text{signature of }\mathfrak{g}_{i} & \text{timelike vector} & %\text{Norm }%
\mathfrak{g}^{i}_{ab}t^{a}t^{b} & sign(\Omega _{i})\\ 
i) & + & + & \left( +,-,-,-\right) & t^{a}=\left( 1,0,0,0\right) & \alpha_{1}^{-1}>0 & + \\ 
ii) & - & - & \left( -,+,+,+\right) & t^{a}=\left( 1,0,0,0\right) & \alpha_{1}^{-1}<0 & + \\ 
iii) & + & - & \left( +,+,+,-\right) & t^{a}=\left( 0,0,0,1\right) & -\alpha_{1}^{-1}<0 & - \\ 
iv) & - & + & \left( -,-,-,+\right) & t^{a}=\left( 0,0,0,1\right) & -\alpha_{1}^{-1}>0 & -%
\end{array}%
\] 
\end{scriptsize}
Note that the signatures of the effective metrics are not determined by their Lorentzian character. In particular, the norm of timelike vectors depends on the background field and does not have a preferred sign
\footnote{One could in principle redefine the metrics to fix both signatures in accordance with the convention chosen for the background metric,
but these redefinitions are going to depend generically on the background fields.}.
The cones, in the other hand, are clearly independent on how these signatures turn out; and the same will apply to all our future results, as we shall see later.
\begin{proposition}\label{prop_1}\text{  }\\
(i)  If $\Omega _{i}>0$, then $C_{\mathfrak{g}_{i}}\cap C_{\eta}\neq \emptyset $.  $\quad$(where $\eta^{ab}:=Diag\left( 1,-1,-1,-1\right) $).\\
(ii) If $\Omega _{i}<0$, then $C_{\mathfrak{g}_{i}}\cap C_{\sigma}\neq \emptyset $. $\quad$(where $\sigma^{ab}:=Diag\left( -1,-1,-1,1\right) $).
\end{proposition}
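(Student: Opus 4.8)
The plan is to work in the adapted null frame introduced just above Proposition \ref{prop_1}, in which $g_{ab}=\eta_{ab}=\mathrm{Diag}(1,-1,-1,-1)$ and the effective metrics are the diagonal quadratic forms \eqref{gab_1}, with coefficients \eqref{alfa_beta1}. Since cones are conformally invariant and the claim only asserts non-emptiness of an intersection, it suffices to exhibit, in this frame, a single vector which is timelike with respect to both $\mathfrak{g}_i$ and the relevant constant metric ($\eta$ or $\sigma$); having found such a vector one is free, by the orientation convention adopted for the effective metrics, to call the connected component of $C_{\mathfrak{g}_i}$ containing it the future one. Throughout, $\alpha_i,\beta_i\neq 0$ because $\mathfrak{g}_i$ is assumed Lorentzian, so the reciprocals appearing below are well defined.

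For part (i), with $\Omega_i=\alpha_i\beta_i>0$, I would take $t^a=(1,0,0,0)$. Reading off \eqref{gab_1}, its $\mathfrak{g}_i$-norm is $\alpha_i^{-1}$, whereas the transverse directions $(0,1,0,0)$, $(0,0,1,0)$, $(0,0,0,1)$ have norms $-\beta_i^{-1}$, $-\beta_i^{-1}$, $-\alpha_i^{-1}$; since $\alpha_i$ and $\beta_i$ share a sign when $\Omega_i>0$, all three are of sign opposite to $\alpha_i^{-1}$. Hence $\mathfrak{g}_i$ is Lorentzian with $t^a$ lying in its timelike cone, while trivially $\eta_{ab}t^at^b=1>0$, so $t^a\in C_{\mathfrak{g}_i}\cap C_\eta$.

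For part (ii), with $\Omega_i<0$, the signs of $\alpha_i$ and $\beta_i$ are opposite, so $\alpha_i^{-1}$ and $-\beta_i^{-1}$ agree in sign while $-\alpha_i^{-1}$ is opposite; here I would take $t^a=(0,0,0,1)$, whose $\mathfrak{g}_i$-norm is $-\alpha_i^{-1}$, with the transverse directions $(1,0,0,0)$, $(0,1,0,0)$, $(0,0,1,0)$ having norms $\alpha_i^{-1}$, $-\beta_i^{-1}$, $-\beta_i^{-1}$, all opposite in sign to $-\alpha_i^{-1}$. Thus $t^a$ is $\mathfrak{g}_i$-timelike, and $\sigma_{ab}t^at^b=1>0$ with $\sigma=\mathrm{Diag}(-1,-1,-1,1)$, giving $t^a\in C_{\mathfrak{g}_i}\cap C_\sigma$.

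I do not expect a genuine obstacle: the argument is bookkeeping of signs in \eqref{gab_1}--\eqref{alfa_beta1}. The only point needing care is time-orientation, since before a symmetrizer is fixed $C_{\mathfrak{g}_i}$ denotes either of two cones; the picture that makes this harmless is that the principal null directions $k^a,l^a$, being null for all three metrics, cut the $k$--$l$ plane into four wedges, the $\eta$-timelike and the $\sigma$-timelike vectors occupying complementary pairs of them, so that $\mathrm{sign}(\Omega_i)$ merely records which pair the $\mathfrak{g}_i$-cone overlaps (cf. Figure \ref{fig:cones}); the explicit vectors above realize exactly this dichotomy.
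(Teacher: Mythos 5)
Your proof is correct and is essentially the paper's own argument: the paper establishes Proposition \ref{prop_1} through the sign table preceding it, which for each sign pattern of $(\alpha_i,\beta_i)$ exhibits exactly the vectors $(1,0,0,0)$ (when $\Omega_i>0$) or $(0,0,0,1)$ (when $\Omega_i<0$) as $\mathfrak{g}_i$-timelike in the adapted frame, just as you do. Your handling of the time-orientation ambiguity is also consistent with the paper's convention of treating both cones of each effective metric before a symmetrizer is fixed.
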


%In our frame the signature of $\eta ^{ab}$ is $\left( +,-,-,-\right) $ so the cases $i$ and $ii$ \ (in which $\Omega _{1}>0)$ share intersection cone with Minkowski metric $\eta ^{ab}$ and in the cases $iii$ and $iv$ ($\Omega_{1}<0$) non intersection. 
%So the sign of $\Omega _{1}$ garantee intersection with Minkowski or not$.$ An analogue table and conclutions can be write for $g_{2}^{ab}$ and for the inverses.
\noindent Therefore, there are only three qualitatively different configurations for the effective cones in $T_{p}\textbf{M}$: either the three cones intersect (Fig.\ref{cones:a}); 
the two effective cones intersect each other, but do not intersect the background cone (Fig.\ref{cones:c}); or they don't intersect (Fig.\ref{cones:b}). Moreover, when the effective cones do intersect each other, it turns out that is always possible to single one as being included inside the other. Thus, the following holds
\begin{proposition} \label{prop:inclusion} Whenever the effective cones intersect each other, i.e: $C_{\mathfrak{g}_{1}} \cap C_{\mathfrak{g}_{2}} \neq \emptyset$, then\\
(i) $\text{ }C^{*}_{\mathfrak{g}_{1}} \subseteq C^{*}_{\mathfrak{g}_{2}}  $\\  
(ii) $ C_{\mathfrak{g}_{2}} \subseteq C_{\mathfrak{g}_{1}}$
\end{proposition}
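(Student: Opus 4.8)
\emph{Approach.} I would prove (ii) directly in the adapted null frame of \eqref{gab_1}--\eqref{alfa_beta1}, and then read off (i) for free: the co-cone operation reverses inclusions and $C^{*}_{\mathfrak{g}_{i}}$ is by definition the dual of $C_{\mathfrak{g}_{i}}$, so $C_{\mathfrak{g}_{2}}\subseteq C_{\mathfrak{g}_{1}}$ gives at once $C^{*}_{\mathfrak{g}_{1}}\subseteq C^{*}_{\mathfrak{g}_{2}}$. In the adapted frame everything is explicit: with $\rho^{2}:=(t^{1})^{2}+(t^{2})^{2}$, $C_{\mathfrak{g}_{i}}$ is a connected component of the set of $\mathfrak{g}_{i}$--timelike vectors, and since $k^{a},l^{a}$ span the $(t^{0},t^{3})$--plane, the closure of every such cone contains the wedge of that plane bounded by the two PND's. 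Thus the whole question concerns the family of cones attached to the diagonal forms $\alpha_{i}^{-1}\big((t^{0})^{2}-(t^{3})^{2}\big)-\beta_{i}^{-1}\rho^{2}$, and is governed by the single ratio $\alpha_{i}/\beta_{i}$ together with $\operatorname{sign}(\Omega_{i})$.

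\emph{Step A: intersection forces matching signs of $\Omega_{1},\Omega_{2}$, and then automatic nesting.} Directly from \eqref{gab_1}: if $\Omega_{i}>0$ then $C_{\mathfrak{g}_{i}}\subseteq\{(t^{0})^{2}>(t^{3})^{2}\}$ (rows i)--ii) of the table), while if $\Omega_{i}<0$ then $C_{\mathfrak{g}_{i}}\subseteq\{(t^{3})^{2}>(t^{0})^{2}\}$ (rows iii)--iv)); these two regions are disjoint, so opposite signs of the $\Omega_{i}$ give $C_{\mathfrak{g}_{1}}\cap C_{\mathfrak{g}_{2}}=\emptyset$ --- precisely configuration~\ref{cones:b}. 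Hence under the hypothesis we are in configuration~\ref{cones:a} ($\Omega_{1},\Omega_{2}>0$) or \ref{cones:c} ($\Omega_{1},\Omega_{2}<0$). In the first case both cones contain $(1,0,0,0)$ and their sections at $\{t^{0}=1\}$ are the solid ellipsoids $\{(t^{3})^{2}+(\alpha_{i}/\beta_{i})\rho^{2}<1\}$, which are round in the $(t^{1},t^{2})$--directions and share the same $t^{3}$--semiaxis, hence are totally ordered by inclusion according to $\alpha_{i}/\beta_{i}$; in the second case the sections at $\{t^{3}=1\}$ are $\{(t^{0})^{2}-(\alpha_{i}/\beta_{i})\rho^{2}<1\}$, again totally ordered. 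So the two cones are automatically nested, and only the \emph{direction} of the nesting remains.

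\emph{Step B: identifying which cone is inside.} From $\mathfrak{g}_{1}^{ab}=a\,g^{ab}+b_{1}F^{a}{}_{c}F^{bc}$ and $a\,\mathfrak{g}_{2}^{ab}=a\,g^{ab}+b_{2}F^{a}{}_{c}F^{bc}$ one gets the key identity $a\,\mathfrak{g}_{2}^{ab}=\mathfrak{g}_{1}^{ab}-\sqrt{\Delta}\,F^{a}{}_{c}F^{bc}$ with $\sqrt{\Delta}=b_{1}-b_{2}\geq 0$. In the adapted frame $F^{a}{}_{c}F^{bc}n_{a}n_{b}=P\big(n_{0}^{2}-n_{3}^{2}\big)-M\big(n_{1}^{2}+n_{2}^{2}\big)$ with $P=\tfrac14\big(F-\sqrt{F^{2}+G^{2}}\big)\leq 0\leq M=\tfrac14\big(F+\sqrt{F^{2}+G^{2}}\big)$, so $-F^{a}{}_{c}F^{bc}$ is positive semidefinite on the half--space $\{n_{0}^{2}\geq n_{3}^{2}\}$. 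When $\Omega_{1},\Omega_{2}>0$, pairing any $n\in\overline{C^{*}_{\mathfrak{g}_{1}}}$ with the boundary rays $k^{a},l^{a}$ shows $n_{0}^{2}\geq n_{3}^{2}$, hence $a\,\mathfrak{g}_{2}^{ab}n_{a}n_{b}=\mathfrak{g}_{1}^{ab}n_{a}n_{b}+\sqrt{\Delta}\,(-F^{a}{}_{c}F^{bc}n_{a}n_{b})\geq\mathfrak{g}_{1}^{ab}n_{a}n_{b}$; once one knows $a>0$ in this regime (so that both metrics sit in row i) and division by $a$ preserves orientation), this gives $\mathfrak{g}_{2}^{ab}n_{a}n_{b}>0$ whenever $n\in C^{*}_{\mathfrak{g}_{1}}$, i.e. $C^{*}_{\mathfrak{g}_{1}}\subseteq C^{*}_{\mathfrak{g}_{2}}$, which is (i), hence (ii). The case $\Omega_{1},\Omega_{2}<0$ is handled by the parallel ellipsoid comparison, using the factorization $\dfrac{\alpha_{1}}{\beta_{1}}-\dfrac{\alpha_{2}}{\beta_{2}}=\dfrac{(b_{2}-b_{1})(M-P)}{\beta_{1}\beta_{2}}$, in which $(b_{2}-b_{1})\leq 0$ and $(M-P)=\tfrac12\sqrt{F^{2}+G^{2}}\geq 0$.

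\emph{Where the difficulty lies.} The substantive obstacle is the sign book-keeping compressed into ``once one knows $a>0$'' and into the common sign of $\beta_{1}\beta_{2}$. The two metrics enter asymmetrically --- $\mathfrak{g}_{1}$ is the ray $[a:b_{1}]$ and $\mathfrak{g}_{2}$ the ray $[1:b_{2}/a]$ of the pencil $x\,g^{ab}+y\,F^{a}{}_{c}F^{bc}$ --- so the ordering $b_{1}\geq b_{2}$ alone does \emph{not} determine which cone opens wider: the outcome also depends on $\operatorname{sign}(a)$ and on which of rows i)--iv) the two metrics occupy. Closing the argument requires using the standing hypotheses (the effective metrics are non-degenerate and Lorentzian) together with the realizability relations $R=4(\xi_{1}\xi_{3}-\xi_{2}^{2})$, $a=1+\xi_{2}G-\xi_{3}F-RG^{2}/16$, $Q=2(\xi_{1}+\xi_{3}-RF/4)$ and $\Delta=4(N_{1}^{2}+N_{2}^{2})\geq 0$ to exclude the sign patterns that would reverse the inclusion --- in particular to show that an intersecting pair with $\Omega_{1},\Omega_{2}>0$ must have $a>0$ with both metrics in row i), and dually for $\Omega_{1},\Omega_{2}<0$. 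I expect this finite case analysis, rather than any single clean inequality, to be the real content of the proof; the geometric reduction in Steps A--B is what makes it manageable.
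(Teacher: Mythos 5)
Your Steps A--B set up the right geometry (both cones contain the PND plane's boundary rays, nesting is governed by the ratios $\alpha_{i}/\beta_{i}$, and $\mathfrak{g}_{1},\mathfrak{g}_{2}$ lie on the pencil $x\,g^{ab}+y\,F^{a}{}_{c}F^{bc}$), and your duality reduction of (i) to (ii) is legitimate. But the decisive step --- which cone is inside --- is exactly what you do not prove. In Step B the conclusion $\mathfrak{g}_{2}^{ab}n_{a}n_{b}>0$ requires $a>0$ in the regime $\Omega_{1},\Omega_{2}>0$ (if $a<0$ the inequality you derive reverses the inclusion), and the $\Omega_{i}<0$ branch likewise hinges on the unestablished sign of $\beta_{1}\beta_{2}$; you acknowledge this and defer it to a ``finite case analysis'' using the realizability relations. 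As written, the proof is therefore incomplete: the sign patterns that would reverse the nesting are not excluded, and excluding them is the substantive content, not book-keeping.

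The paper closes precisely this gap with one clean algebraic lemma rather than a case analysis: from the explicit expressions
\begin{equation*}
\alpha_{1}\beta_{2}=1+\tfrac{1}{2}\Bigl((FN_{1}+GN_{2})-\sqrt{N_{1}^{2}+N_{2}^{2}}\sqrt{F^{2}+G^{2}}\Bigr),\qquad
-\bigl(1-\alpha_{1}\beta_{2}\bigr)\bigl(1-\alpha_{2}\beta_{1}\bigr)=\tfrac{1}{4}\bigl(GN_{1}-FN_{2}\bigr)^{2}\ge 0,
\end{equation*}
together with $\alpha_{2}\beta_{1}\ge\alpha_{1}\beta_{2}$, one gets $\alpha_{1}\beta_{2}\le 1\le \alpha_{2}\beta_{1}$. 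The hypothesis $C_{\mathfrak{g}_{1}}\cap C_{\mathfrak{g}_{2}}\neq\emptyset$ gives $\Omega_{1}\Omega_{2}=\alpha_{1}\beta_{2}\,\alpha_{2}\beta_{1}>0$, hence $\alpha_{1}\beta_{2}>0$, and therefore $\gamma_{1}\le\gamma_{2}$ with $\gamma_{i}=\sqrt{\lvert\alpha_{i}/\beta_{i}\rvert}$; tilting a timelike covector (respectively vector) toward the orthogonal directions then shows directly that the $\mathfrak{g}_{1}$ co-cone closes before the $\mathfrak{g}_{2}$ one, i.e. $C^{*}_{\mathfrak{g}_{1}}\subseteq C^{*}_{\mathfrak{g}_{2}}$ and $C_{\mathfrak{g}_{2}}\subseteq C_{\mathfrak{g}_{1}}$, uniformly in both regimes $\Omega_{i}>0$ and $\Omega_{i}<0$ and with no reference to $\operatorname{sign}(a)$ or to which row of the signature table the metrics occupy. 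So your expectation that the real content is a sign case analysis is inverted: the missing ingredient is the single inequality $\alpha_{1}\beta_{2}\le 1\le\alpha_{2}\beta_{1}$, and if you add it your pencil argument can be salvaged (it is what rules out, e.g., $a<0$ with all four factors $a+b_{i}P$, $a+b_{i}M$ negative, since that would force $\alpha_{1}\beta_{2}<0$). Without that lemma, or an equivalent, the proposal does not yet prove the proposition.
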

\begin{proof} 
We begin our proof with the important observation
\begin{lemma}
The following inequalities appear as algebraic consequences of the definitions, %(always hold).
\begin{equation}
 \alpha _{1}\beta _{2} \leq 1 \leq \alpha _{2}\beta_{1}
 \label{lemma:alg}
\end{equation}
\end{lemma}
\noindent Indeed, starting from the full expressions,
 \begin{eqnarray}
\alpha _{1}\beta _{2} &=&1+\frac{1}{2}\left( \left( FN_{1}+GN_{2}\right) -%
\sqrt{N_{1}^{2}+N_{2}^{2}}\sqrt{F^{2}+G^{2}}\right)   \label{alfa1_beta2} \\
\alpha _{2}\beta _{1} &=&1+\frac{1}{2}\left( \left( FN_{1}+GN_{2}\right) +%
\sqrt{N_{1}^{2}+N_{2}^{2}}\sqrt{F^{2}+G^{2}}\right)   \nonumber
\end{eqnarray}
and the trivial inequality $\alpha _{2}\beta _{1} \geq \alpha_{1}\beta_{2}$ one obtains, after some simple manipulations,
\begin{equation}
-\left( 1-\alpha _{1}\beta _{2}\right) \left( 1-\alpha _{2}\beta
_{1}\right) = \frac{1}{4}\left( G N_{1}-F N_{2} \right)^{2} \geq 0. \label{ec_imp}
\end{equation}
\noindent The positivity of the left hand side, together with condition $\alpha _{2}\beta _{1} \geq \alpha_{1}\beta_{2}$, forces the inequalities of the lemma to hold. %\eqref{lemma:alg}

We now define two auxiliary quantities which capture the notion of ``how much the given cone (or co-cone) opens in any direction orthogonal to the PND's plane'' in this particular frame.
\begin{equation}
\label{gamma1} \gamma_i := \sqrt{\left\vert \alpha_i/\beta_i\right\vert },
\end{equation}
In spite of the fact that this is a coordinate-dependent notion it allows us to compare the cones and tell which of them is included in the other. 
The latter statement has a well defined geometrical meaning and directly extrapolates to any other frame.
The argument goes as follows. The assumption $C_{\mathfrak{g}_{1}} \cap C_{\mathfrak{g}_{2}} \neq \emptyset$ translates into
\begin{equation}
 sign(\Omega_1 ) = sign(\Omega_2 )  \qquad \text{or equivalently,} \qquad 0 < \Omega_1 \Omega_2 =\alpha _{1}\beta _{2}\alpha _{2}\beta _{1}
\end{equation}
which, together with \eqref{lemma:alg}, implies $\alpha_{1}\beta_{2}>0$ and, consequently, $\gamma _{1}\leq \gamma _{2}$.
%\begin{eqnarray*}
%0 <\alpha _{1}\beta _{2}\leq 1\leq \alpha _{2}\beta _{1}\quad \rightarrow \quad
%0 < (\gamma _{1}/\gamma _{2})^{2}\leq 1.
%\end{eqnarray*}
Now, suppose $\Omega_1 > 0 $ and $\Omega_2 > 0 $ and consider a continuous transition from a time-like to a space-like covector (w.r.t. the effective metrics) parametrized by $x\in \lbrack 0,1]$ in the form
\begin{equation}
n_{a}\left( x\right) =\left( 1-x\right) \left( 1,0,0,0\right) +x\left(0,\cos \phi ,\sin \phi ,0\right)  \label{10_n}
\end{equation}%
with $\phi \in [0,2\pi)$. The idea is to find $x_1$ and $x_2$ such that the covector becomes null, i.e: $\mathfrak{g}_{i}^{ab}n_{a}( x_i ) n_{b}( x_i ) =0$ (i=1,2). The solutions are given by
\begin{equation}
 x_i =\frac{1}{1+\gamma_{i}^{-1}}
\end{equation}
which determines $x_1 \leq x_2 $. Conversely, assuming $\Omega_1 < 0 $ and $\Omega_2 < 0 $ we consider the family
$$n_{a}(x)=\left( 1-x\right) \left(0,0,0,1\right) +x\left( 0,\cos \phi ,\sin \phi ,0\right) $$
and get precisely the same answer i.e. $x_1 \leq x_2 $. This allows us to conclude that $\text{ }C^{*}_{\mathfrak{g}_{1}} \subseteq C^{*}_{\mathfrak{g}_{2}} $ as claimed in (i).  In order to prove (ii), we basically apply the same strategy. Instead of using the reciprocal effective metrics, we now look at the covariant objects $\mathfrak{g}^{i}_{ab}$ and the relevant quantities become $\gamma_{i}^{\ast}:= \gamma^{-1}_{i}$. One then concludes that our previous inequality is inverted i.e. $x_{1}\geq x_{2}$. The latter directly leads us to the desired result (ii).  

\end{proof}

%%%%%%%%%%%%%%%%%%%%%%%%%%%%%%%%%%%%%%%%%%%%%%%%%%%%%%%%%%%%%%%%%%%%%%%%%%%%%%%%%%%%%%%%%%%%%%%%%%%%%%%%%%%%%%%%%%%%%%%%%%%%%%%%%%%%%%%% DEG

\subsubsection{Degenerate $F_{ab}$}

For a degenerate 2-form, the two null directions collapse into a single one (see figure \ref{fig:deg}). Roughly, this means that, either the cones do not intersect each other, or one of them is included in the other\footnote{this do not exclude the possibility they coincide.}. Again, we refer the reader to appendix A for more details on frames. In particular, one can reduce the metrics and inverses to,
%Otherwise, it would imply there exist more than one common null direction between the effective metrics.  
\begin{equation}
\begin{array}{c}
\mathfrak{g}^{i}_{ab}t^{a}t^{b} =\allowbreak \left(
t_{0}^2 - t_{3}^2 \right) + \varepsilon^{4}b_{i}\left( t_{0}-t_{3}\right)^2
-\left( t_{1}^{2}+t_{2}^{2}\right) \\ 
\mathfrak{g}_{i}^{ab}n_{a}n_{b} =\allowbreak \left(
n_{0}^2 - n_{3}^2 \right) -\varepsilon^{4}b_{i}\left( n_{0}+n_{3}\right)^2
-\left( n_{1}^{2}+n_{2}^{2}\right) 
\end{array}
\label{gab_deg}
\end{equation}
with $\varepsilon \neq 0$ a free parameter, a remaining freedom in our frame choice that we'll fix in a convenient way later on. Note when $b_i=0$ the effective metric reduce to background metric.

\begin{proposition} \label{prop:deg} For a degenerate $F_{ab}$, the effective cones always intersect, i.e: $C^{*}_{\mathfrak{g}_{1}} \cap C^{*}_{\mathfrak{g}_{2}} \neq \emptyset$, and with background metric $C^{*}_{\mathfrak{g}_{1}} \cap C^{*}_{\mathfrak{g}_{2}} \cap C^{*}_{\mathfrak{\eta}} \neq \emptyset$ . Furthermore,\\
(i) $\text{ }C^{*}_{\mathfrak{g}_{1}} \subseteq C^{*}_{\mathfrak{g}_{2}}  $\\  
(ii) $ C_{\mathfrak{g}_{2}} \subseteq C_{\mathfrak{g}_{1}}$
\end{proposition}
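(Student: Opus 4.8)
\emph{Proof sketch.} The plan is to read off from the reduced forms \eqref{gab_deg} that, in the degenerate frame, both effective reciprocal metrics are rank-one deformations of the background metric \emph{along the single principal null direction}, so they are totally ordered as quadratic forms, the ordering being fixed by the sign of $b_{1}-b_{2}$. Concretely, let $\ell$ denote that null direction, with $\ell^{a}=(1,0,0,1)$ in this frame, so that $\ell^{a}n_{a}=n_{0}+n_{3}$, $\ell_{a}t^{a}=t_{0}-t_{3}$ and $\eta_{ab}\ell^{a}\ell^{b}=0$; then \eqref{gab_deg} reads $\mathfrak{g}_{i}^{ab}=\eta^{ab}-\varepsilon^{4}b_{i}\,\ell^{a}\ell^{b}$ and $\mathfrak{g}^{i}_{ab}=\eta_{ab}+\varepsilon^{4}b_{i}\,\ell_{a}\ell_{b}$, which are mutually inverse precisely because $\ell$ is null. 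The only algebraic input needed is $b_{1}-b_{2}=\sqrt{\Delta}\ge 0$, immediate from the definitions of $b_{1},b_{2}$. As a by-product, since $\ell$ is null the determinant lemma gives $\det\mathfrak{g}_{i}=\det\eta$, so in the degenerate-field case the effective metrics are automatically non-degenerate, hence Lorentzian by continuity through $\eta$; no degenerate sub-case arises here.

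The geometric core I would invoke is the elementary remark: if $P\le P'$ are Lorentzian quadratic forms on $T_{p}\textbf{M}$ (the inequality holding everywhere) and $P(v_{0})>0$ for some $v_{0}$, then the connected component of $\{v:P(v)>0\}$ containing $v_{0}$ is contained in the component of $\{v:P'(v)>0\}$ containing $v_{0}$; indeed the positivity set of a Lorentzian form has exactly two components, and $P'\ge P>0$ on the (connected) one through $v_{0}$, which therefore lies inside a single component of $\{P'>0\}$. With the natural orientation convention — taking the future cone and co-cone of each of the three metrics to be the component containing one fixed common (co)vector, exhibited below — this converts pointwise domination of quadratic forms directly into the claimed cone inclusions.

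Applying this: for the reciprocal metrics $\mathfrak{g}_{2}^{ab}n_{a}n_{b}-\mathfrak{g}_{1}^{ab}n_{a}n_{b}=\varepsilon^{4}(b_{1}-b_{2})(n_{0}+n_{3})^{2}\ge 0$, which gives $C^{*}_{\mathfrak{g}_{1}}\subseteq C^{*}_{\mathfrak{g}_{2}}$, i.e.\ (i); for the covariant metrics the deformation term carries the opposite sign, so $\mathfrak{g}^{1}_{ab}t^{a}t^{b}-\mathfrak{g}^{2}_{ab}t^{a}t^{b}=\varepsilon^{4}(b_{1}-b_{2})(t_{0}-t_{3})^{2}\ge 0$, which gives $C_{\mathfrak{g}_{2}}\subseteq C_{\mathfrak{g}_{1}}$, i.e.\ (ii) — equivalently, (ii) is the dual statement of (i). For the non-empty intersections, take $n_{a}=(1,0,0,-1+\delta)$ with $0<\delta\ll 1$: then $n_{0}+n_{3}=\delta$, $\eta^{ab}n_{a}n_{b}=2\delta-\delta^{2}$ and $\mathfrak{g}_{i}^{ab}n_{a}n_{b}=2\delta-\delta^{2}-\varepsilon^{4}b_{i}\delta^{2}$, all strictly positive for $\delta$ small, so $n_{a}\in C^{*}_{\mathfrak{g}_{1}}\cap C^{*}_{\mathfrak{g}_{2}}\cap C^{*}_{\eta}$; this is the common covector used above, and the mirror vector $t^{a}=(1,0,0,1-\delta)$ does the analogous job for the cones.

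The argument is essentially computational, so I do not expect a serious obstacle. The one point deserving care is the two-component structure of the positivity sets: pointwise domination of quadratic forms by itself only yields containment in the full, disconnected positivity set, and one must single out the correct component — which is exactly what anchoring the orientations to a common (co)vector accomplishes. Beyond that, one simply takes the frame reduction \eqref{gab_deg}, established in appendix A, for granted.
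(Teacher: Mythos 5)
Your proof is correct, but it follows a genuinely different route from the paper's. The paper proves the triple intersection by invoking the boost freedom of the frame to take $\varepsilon^{4}<\min_i|b_i|^{-1}$, so that the single covector $\tau_a=(1,0,0,0)$ is timelike for $\eta$ and both effective metrics (their Lemma on $\varepsilon$), and then proves the inclusions by sliding a one-parameter family of covectors from timelike to spacelike, comparing the parameter values $x_1\le x_2$ at which each metric sees a null covector — with a footnote arguing that testing a single direction suffices because the metrics share exactly one null direction. You instead exploit the structural fact that in the degenerate frame $\mathfrak{g}_i^{ab}=\eta^{ab}-\varepsilon^{4}b_i\,\ell^{a}\ell^{b}$ and $\mathfrak{g}^i_{ab}=\eta_{ab}+\varepsilon^{4}b_i\,\ell_a\ell_b$ are rank-one null deformations (mutually inverse since $\ell$ is null), so that $b_1\ge b_2$ gives a pointwise ordering of the quadratic forms in \emph{all} directions at once; the component-selection remark then converts this directly into the cone inclusions, and your covector $n_a=(1,0,0,-1+\delta)$ near the shared null direction lies in all three co-cones for any $\varepsilon$, with no smallness condition. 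This buys two things over the paper's argument: you do not need the frame (boost) freedom that the paper's intersection lemma leans on, and you do not need the footnote's informal "one direction suffices" justification, since the global domination plus connectedness of the positivity components does that work rigorously. The paper's path argument, on the other hand, is the same mechanism it uses in the non-degenerate case (Proposition 2), so it keeps the two cases uniform. Your handling of the orientation issue — anchoring all future cones at the exhibited common (co)vector — is exactly the care this step needs and is consistent with the paper's conventions; the determinant by-product ($\det\mathfrak{g}_i=\det\eta$, hence automatically Lorentzian here) is a nice extra the paper does not state.
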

\begin{proof}
First, we want to see whether there is a non-empty intersection among the effective co-cones. The idea is to choose the parameter $\varepsilon $ ``sufficiently small'', so as to construct an explicit timelike covector w.r.t. both reciprocal effective metrics
\begin{lemma}\label{free-eps}
If $\varepsilon^4 < \min_{i=1,2} \left( |b_i |^{-1} \right)$, there exists a covector which is time-like with respect to both reciprocal effective metrics.
\end{lemma}
\begin{proof}
Let us consider, in the present frame, four linearly independent covectors:\\ 
$\tau_a = (1,0,0,0)$; $x_a = (0,1,0,0)$; $y_a = (0,0,1,0)$; $z_a = (0,0,0,1)$. One obtains,
\[
\begin{array}{ccc}
\mathfrak{g}_{i}^{ab}\tau_{a}\tau_{b}= \left( 1-\varepsilon^{4}b_{i}\right) >0 &  & 
\mathfrak{g}_{i}^{ab}x_{a}x_{b}=-1 \\ 
\mathfrak{g}_{i}^{ab}y_{a}y_{b}=-1
&  & \mathfrak{g}_{i}^{ab}z_{a}z_{b}=-\left( 1+\varepsilon ^{4}b_{i}\right) <0%
\end{array}%
\]
Therefore, there exists a common time-like covector $\tau_a$, and thus  $C^{*}_{\mathfrak{g}_{1}} \cap C^{*}_{\mathfrak{g}_{2}} \cap C^{*}_{\mathfrak{\eta}} \neq \emptyset$.
\end{proof}
To prove \textit{(i)}, we proceed in the same lines as in the non-degenerate case. Firstly, from the definitions of $b_1$ and $b_2$ there follows $b_2 \leq b_1$. We then consider a continuous transformation from a timelike to a space-like covector, parametrized by $x\in \lbrack 0,1]$ in the form: 
\begin{equation}
n_{a}\left( x\right) =\left( 1-x\right) \tau_a + x z_a 
\end{equation}%
with $\tau_a$ and $z_a$ as defined above. Again, we look for $x_1$ and $x_2$ within the range $\lbrack 0,1]$, for which the covector becomes null w. r. t. both metrics, i.e: $g_{i}^{ab}n_{a}( x_i ) n_{b}( x_i ) =0$ (for $ i=1,2$ ).
Then,
\begin{equation}
 x_i =\frac{1-\varepsilon^4 b_i}{2} 
\end{equation}
which implies $x_1 \leq x_2 $; Consequently,  $\text{ }C^{*}_{\mathfrak{g}_{1}} \subseteq C^{*}_{\mathfrak{g}_{2}}$
%%%%%%%%%%%%%%%%%%%%%%%%%%%%%%%%%%%
\footnote{%To complete the argument in this case, actually, we should add a few comments here. 
Notice that, in order to conclude this, it is enough to start from the common time-like covector and ``move'' in any direction,since we already know there is one (and just one!) common null direction among the effective metrics.
Thus, we argue it is not possible to find different results in different directions. 
As observed in the beginning of this section, once they have non-empty intersection, one of the cones must be included on the other. 
}. We remark that (ii) can be easily obtained in the same lines.

\end{proof}

\begin{figure}[!ht]
    \subfloat[$0 < b_{2}\leq b_{1}$  \label{:dummy}]{%
      \includegraphics[width=0.25\textwidth]{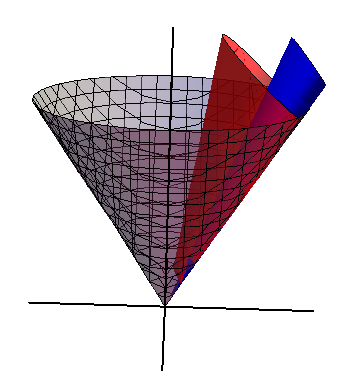}
    }
    \hfill
    \subfloat[$ b_{2}<0< b_{1}$ \label{subfig-2:dumm}]{%
      \includegraphics[width=0.25\textwidth]{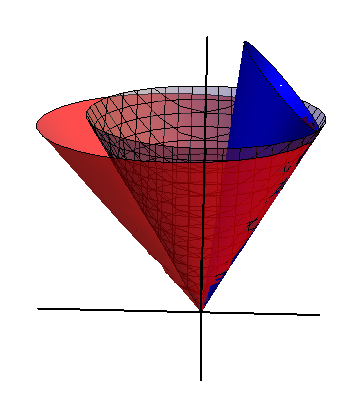}
    }
    \hfill
    \subfloat[$ b_{2}\leq b_{1}<0$ \label{subfig-2:dummy}]{%
      \includegraphics[width=0.3\textwidth]{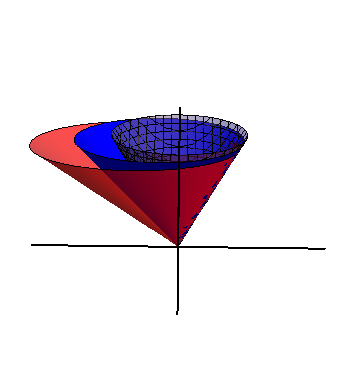}
    }
    \caption{Degenerate case. Possible configurations for the null surfaces of the metrics:\\ $\mathfrak{g}^{1}_{ab}$ (red); $\mathfrak{g}^{2}_{ab}$ (blue); background metric (meshed gray).}
    \label{fig:deg}
  \end{figure}

%%%%%%%%%%%%%%%%%%%%%%%%%%%%%%%%%%%%%%%%%%%%%%%%%%%%%%%%%%%%%%%%%%%%%%%%%%%%%%%%%%%%%%%%%%%%%%%%%%%%%%%%%%%%%%%%%%%%%%%%%%%%%%%%%%%%%%%%%%%% 
%%%%%%%%%%%%%%%%%%%%%%%%%%%%%%%%%%%%%%%%%%%%%%%%%%%%%%%%%%%%%%%%%%%%%%%%%%%%%%%%%%%%%%%%%%%%%%%%%%%%%%%%%%%%%%%%%%%%%%%%%%% HYPERBOLIZATIONS 
%%%%%%%%%%%%%%%%%%%%%%%%%%%%%%%%%%%%%%%%%%%%%%%%%%%%%%%%%%%%%%%%%%%%%%%%%%%%%%%%%%%%%%%%%%%%%%%%%%%%%%%%%%%%%%%%%%%%%%%%%%%%%%%%%%%%%%%%%%%% 
 
\section{Hyperbolizations}
%The main goal of this treatment is to get a better notion on the evolutionary features of the PDE's while keeping explicit the covariant nature of the equations.
In order to analyze the evolutionary properties of the system \eqref{ql} more closely, we shall recast it in the geometrical framework suggested by Geroch \cite{Geroch}:
\begin{equation}\label{KF}
K_{A \phantom a \alpha}^{\phantom a m}(x,\Phi)\partial_{m}\Phi^{\alpha}+J_{A}(x,\Phi)=0,
\end{equation}
where $K_{A \phantom a \alpha}^{\phantom a m}(x,\Phi)$ is called the principal part and $J_{A}(x,\Phi)$ stands for semi-linear contributions (whose explicit form is unnecessary for our discussion here). In this expression capital Latin indices, $A$, stand for the space of tensorial equations, lower Latin indices, $m$, for space-time indices and Greek indices, $\alpha$, for multi-tensorial unknowns. 

\begin{remark}
Typically, smooth solutions of \eqref{KF} are interpreted as cross-sections $\Phi^{\alpha}(x)$ over a smooth fibre bundle $\mathcal{B}$, with points $\kappa=(x^{a},\Phi^{\alpha})$ and we interpret the fibre over $x^{a}$ as the space of allowed physical states at $x^{a}$, i.e., as the space of possible field-values at that point.
\end{remark}

%\noindent\textbf{Definition 1}.
\begin{definition}
By a hyperbolization of \eqref{KF} over a sub-manifold $\textbf{S}\in\textbf{M}$, we mean a smooth symmetrizer $h^{A}_{\phantom a\alpha}$ such that:
\begin{enumerate}
\item{the field $h^{A}_{\phantom a\alpha}K_{A \phantom a \beta}^{\phantom a m}$ is symmetric in $\alpha,\ \beta$} in $\textbf{S}$;
\item{there exists a covector $n_{m}\in T^{*}\textbf{S}$ such that $h^{A}_{\phantom a\alpha}K_{A \phantom a \beta}^{\phantom a m}n_{m}$ is positive-definite.}
\end{enumerate}
If a system of first-order PDEs admits a symmetrizer satisfying the above conditions, we say that it is \textbf{symmetric hyperbolic}.
\end{definition}

\noindent Once a hyperbolizer exists, standard theorems apply and we know that given any smooth data in a hypesurface such that $n_a$ is normal to it a local solution for it would exists. Notice that since the set of co-vectors $n_a$ for which the hyperbolizer is positive is open, we can always choose them in a neighborhood of a point so that they are surface forming. We introduce also, for later convenience, the notion of physical propagation in this context. We denote by $C_{H}^{*}$ the collection of all covectors $\ n_{a}$ satisfying condition $(\textit{2.})$ above. 
Then, $C_{H}^{*}$ is a non-empty open convex cone.
\begin{definition}
The ``signal-propagation directions'' will be given by all tangent vectors $p^{a} $, such that $p^{a}n_{a}>0$, $\forall$ $n_{a}\in C_{H}^{*}$. The set of $p^a$ also forms a (non-empty) closed convex cone denoted by $C_{H}$, the ``dual cone'' of $C_{H}^{*}$.
\end{definition}
\begin{remark}
These cones could depend, in principle, on the hyperbolization selected. 
But it turns out that, for most physical examples, these cones are essentially independent of hyperbolization.
\end{remark}

For NLED, we  declare $\Phi^{\alpha}\rightarrow F^{bc}$, and a closer inspection of \eqref{ql} allows us to read-off the principal symbol
\begin{equation}\label{K}
K_{A \phantom a \alpha}^{\phantom a m}\rightarrow \left(-\left[g_{a\phantom a\phantom a bc}^{\phantom a m}+\ F_{a}^{\phantom a m}\big(\xi_{1}F_{bc}+\xi_{2}\stackrel{\ast}{F_{bc}}\big)+\stackrel{\ast} {F_{a}^{\phantom a m}}\big(\xi_{2}F_{bc}+\xi_{3}\stackrel{\ast}{F_{bc}}\big)\right],\ \frac{1}{2}\eta_{a\phantom a\ bc}^{\phantom a m}\right).
\end{equation}
with $g_{abcd}:= \frac{1}{2} (g_{ac}g_{bd}-g_{ad}g_{bc})$. In what follows we shall see that symmetric hyperbolicity holds for NLED under some basic assumptions. Basically, the latter are conditions on the Lagrangian, its derivatives and field strengths.

\subsection{Symmetrizer}
 %Unfortunately, there is no known practical procedure to obtain a symmetrizer for a system of arbitrary quasi-linear differential equations.  In fact, it is possible that a symmetrizer does not even exist. In practice, hyperbolizations are found, in sufficiently low dimensions, by solving explicitly the algebraic equations inherent in (\ref{KF}) and, in higher dimensions, by an educated guessing. 
Our first task is to find a symmetrizer for \eqref{K}. In other words, we look for a $h^{A}_{\phantom a\alpha}$ such that $\delta\Phi{}^{\alpha}(h^{A}_{\phantom a\alpha}K_{A \phantom a \beta}^{\phantom a m})\delta\hat{\Phi}{}^{\beta}$ 
is symmetric in $\delta\Phi$ and $\delta\hat{\Phi}$. Making the identifications $\delta\Phi^{\alpha}\rightarrow X^{ab}$ and $\delta\hat{\Phi}{}^{\alpha}\rightarrow Y^{ab}$, with $X^{ab}$ and $Y^{ab}$ arbitrary anti-symmetric tensors, one has
\begin{equation}\label{eqn}
K_{A \phantom a \beta}^{\phantom a m}\delta\hat{\Phi}{}^{\beta}=\left(-\left[Y_{a}^{\phantom a m}+A_{\textbf{Y}}\ F_{a}^{\phantom a m}+B_{\textbf{Y}}\stackrel{\ast}{F_{a}^{\phantom a m}}\right]\ ,\ \stackrel{\ast}{Y_{a}^{\phantom a m}}\right),
\end{equation}
where
\begin{equation}
A_{\textbf{Y}}:= [\xi_{1}(\textbf{F}.\textbf{Y})+\xi_{2}(\stackrel{\ast}{\textbf{F}}.\textbf{Y})]\quad\quad\quad B_{\textbf{Y}}:= [\xi_{2}(\textbf{F}.\textbf{Y})+\xi_{3}(\stackrel{\ast}{\textbf{F}}.\textbf{Y})]
\end{equation}
and  $\textbf{X}.\textbf{Y}\equiv X^{ab}Y_{ab}$. Note that the above relations depend on the background field, the background metric and on the particular Lagrangian theory. It is natural to expect that $h^{A}_{\phantom a\alpha}$ depends also on these quantities. Also, when the theory is linear, i.e. $A_{\textbf{Y}}=B_{\textbf{Y}}=0$, we know that
\begin{equation}\label{max}
h^{A}_{\phantom a\alpha}\delta\Phi{}^{\alpha}=\left(X^{a}_{\phantom a q}\ ,\  -\stackrel{\ast}{X^{a}_{\phantom a q}}\right)t^{q}\end{equation}
where $t^{q}$ is an auxiliary smooth vector field. Therefore, the full nonlinear symmetrizer must: i) reduce to \eqref{max} in the linear case and ii) depend on at least one smooth vector field $t^{q}(x)$. We claim that it is given by the formula
\begin{equation}\label{sym}
h^{A}_{\phantom a\alpha}\delta\Phi{}^{\alpha}=\left(X^{a}_{\phantom a q}\ ,\  -\stackrel{\ast}{X^{a}_{\phantom a q}}-A_{\textbf{X}}\stackrel{\ast}{F^{a}_{\phantom a q}}+B_{\textbf{X}}F^{a}_{\phantom a q}\right)t^{q}.
\end{equation}
To show that it is indeed a symmetrizer, we first multiply \eqref{sym} by \eqref{eqn}. It follows,
\begin{eqnarray*}
\delta\Phi{}^{\alpha}(h^{A}_{\phantom a\alpha}K_{A \phantom a \beta}^{\phantom a m})\delta\hat{\Phi}{}^{\beta} &=&\Big(-X^{a}_{\phantom a q}Y_{a}^{\phantom a m}-A_{\textbf{Y}}X^{a}_{\phantom a q} \ F_{a}^{\phantom a m}-B_{\textbf{Y}}X^{a}_{\phantom a q}\stackrel{\ast}{F_{a}^{\phantom a m}}-\\
&&\quad -\stackrel{\ast}{X^{a}_{\phantom a q}}\stackrel{\ast}{Y_{a}^{\phantom a m}}-A_{\textbf{X}}\stackrel{\ast}{F^{a}_{\phantom a q}}\stackrel{\ast}{Y_{a}^{\phantom a m}}+B_{\textbf{X}}F^{a}_{\phantom a q}\stackrel{\ast}{Y_{a}^{\phantom a m}}\Big)t^{q}.
\end{eqnarray*}
Recalling that any pair of anti-symmetric tensors satisfies
\begin{equation}
\stackrel{\ast}{X^{aq}}\stackrel{\ast}{Y_{am}}=-\frac{1}{2}(\textbf{X}.\textbf{Y})\delta_{m}^{\phantom a q}+X_{am}Y^{aq},\quad\quad\quad\quad X^{aq}\stackrel{\ast}{Y_{am}}=\frac{1}{2}(\stackrel{\ast}{\textbf{X}}.\textbf{Y})\delta_{m}^{\phantom a q}-\stackrel{\ast}{X_{am}}Y^{aq},
\end{equation}
we obtain
\begin{equation}\label{exp}
\delta\Phi^{\alpha}(h^{A}_{\phantom a\alpha}K_{A \phantom a \beta}^{\phantom a m})\delta\hat{\Phi}{}^{\beta} =\big(M_{q}^{\phantom a m}+N_{q}^{\phantom a m}+L_{q}^{\phantom a m}\big)t^{q},
\end{equation}
with
\begin{eqnarray*}
M_{q}^{\phantom a m}&=&+(X^{\phantom a a}_{q}Y_{a}^{\phantom a m}+Y^{\phantom a a}_{q}X_{a}^{\phantom a m})+\frac{1}{2}(\textbf{X}.\textbf{Y})\delta_{q}^{\phantom a m},\\
N_{q}^{\phantom a m}&=&-(A_{\textbf{Y}}X^{a}_{\phantom a q}+A_{\textbf{X}}Y^{a}_{\phantom a q}) \ F_{a}^{\phantom a m}-(B_{\textbf{Y}}X^{a}_{\phantom a q}+B_{\textbf{X}}Y^{a}_{\phantom a q})\stackrel{\ast}{F_{a}^{\phantom a m}},\\
L_{q}^{\phantom a m}&=&+\{\xi_{1}(\textbf{F}.\textbf{X})(\textbf{F}.\textbf{Y})+\xi_{2}[(\stackrel{\ast}{\textbf{F}}.\textbf{X})(\textbf{F}.\textbf{Y})+(\stackrel{\ast}{\textbf{F}}.\textbf{Y})(\textbf{F}.\textbf{X})]+\xi_{3}(\stackrel{\ast}{\textbf{F}}.\textbf{X})(\stackrel{\ast}{\textbf{F}}.\textbf{Y})\}\delta_{q}^{\phantom a m}/2,
\end{eqnarray*}
which are symmetric quantities in X and Y. Note that this result is general and does not depend on the specific form of the Lagrangian. Note also that the symmetrizer splits into a main term, which coincides with Maxwell's symmetrizer and two nonlinear terms involving the background 2-form and its dual.
\subsection{Positive definiteness}

We now investigate when the symmetrizer constitutes a hyperbolization of the equations of motion. For the sake of conciseness, let us define the symmetric object
\begin{equation}
H_{\alpha\beta}(t,n)\equiv  h^{A}_{\phantom a\alpha}(t) K_{A\phantom a\beta}^{\phantom a m}n_{m}
\end{equation}
We emphasize here the linear dependence of this object with respect to both the vector $t^a$ (in the symmetrizer) and the covector $n_m$, a fact that will be important later on.
We shall see that the admissible range for the quantities $(t^{q},n_{m})$ has a nice geometrical interpretation in terms of the effective metrics.

According to our previous definition, the system will be symmetric hyperbolic if $H_{\alpha\beta}(t,n)$ constitutes a positive definite bi-linear form, i.e.
\begin{equation}\label{positivity-cond}
\Phi^{\alpha}H_{\alpha\beta}(t,n)\Phi^{\beta}>0
\end{equation}
for any non-zero $\Phi^{\alpha}$. 
Here, $\Phi^{\alpha}$ represents an arbitrary two-form, and as such can be thought as a vector in $\mathbb{R}^6$. Therefore, the family of symmetric maps $H_{\alpha\beta}(t,n): \mathbb{R}^6\rightarrow\mathbb{R}$. One can build a natural basis for this space by taking the six possible anti-symmetrized pairs of basis-elements in $T_{p}^{*}\textbf{M}$.

From \eqref{exp} we get 
\begin{equation}
H_{\alpha\beta}(t,n)=H_{\alpha\beta}(t,n)\big\vert_{M}+H_{\alpha\beta}(t,n)\big\vert_{N}+H_{\alpha\beta}(t,n)\big\vert_{L}
\end{equation}
A tedious but straightforward calculation yields
%\begin{eqnarray*}
%&&H_{\alpha\beta}(t, n)\vert_{M}=\frac{1}{4}\left[t_{[n}g_{l][b}n_{c]}+t_{[b}g_{c][n}n_{l]} +(t.n)g_{nlbc}\right],\\\\
%&&H_{\alpha\beta}(t, n)\vert_{N}=\frac{1}{2}(\xi_{1}F_{bc}+\xi_{2}\stackrel{\ast}{F_{bc}})t_{[n}\ell_{l]}+\frac{1}{2}(\xi_{1}F_{nl}+\xi_{2}\stackrel{\ast}{F_{nl}})t_{[b}\ell_{c]}+\\\\
%&&\quad\quad\quad\quad\quad\ +\frac{1}{2}(\xi_{2}F_{bc}+\xi_{3}\stackrel{\ast}{F_{bc}})t_{[n}\stackrel{\ast}{\ell_{l]}}+\frac{1}{2}(\xi_{2}F_{nl}+\xi_{3}\stackrel{\ast}{F_{nl}})t_{[b}\stackrel{\ast}{\ell_{c]}},\\\\
%&&H_{\alpha\beta}(t, n)\vert_{L}=\frac{1}{2}\{\xi_{1}F_{nl}F_{bc}+\xi_{2}[\stackrel{\ast}{F_{nl}}F_{bc}+\stackrel{\ast}{F_{bc}}F_{nl}]+\xi_{3}\stackrel{\ast}{F_{nl}}\stackrel{\ast}{F_{bc}}\}(t.n).
%\end{eqnarray*}\\
%with $\ell^{a}\equiv F^{am}n_{m}$, $\stackrel{\ast}{\ell^{a}}\equiv \stackrel{\ast}{F^{am}}n_{m}$ and $g_{abcd}\equiv g_{ac}g_{bd}-g_{ad}g_{bc}$. 
\begin{eqnarray*}
&&H_{\alpha\beta}(t, n)\big\vert_{M}=\Big(g_{abq[c}\delta^{m}_{\phantom a d]}+g_{cdq[a}\delta^{m}_{\phantom a b]}+g_{abcd}\delta^{m}_{\phantom a q}\Big)t^{q}n_{m},\\
&&H_{\alpha\beta}(t, n)\big\vert_{N}=\Big(\big(\xi_{1}F_{ab}+\xi_{2}\stackrel{\ast}{F_{ab}}\big)g_{q[c}F_{d]}^{\phantom a m}+\big(\xi_{1}F_{cd}+\xi_{2}\stackrel{\ast}{F_{cd}}\big)g_{q[a}F_{b]}^{\phantom a m} +\\
&&\quad\quad\quad\quad\quad\ +\ \ \big(\xi_{2}F_{ab}+\xi_{3}\stackrel{\ast}{F_{ab}}\big)g_{q[c}\stackrel{\ast}{F_{d]}^{\phantom a m}}+\big(\xi_{2}F_{cd}+\xi_{3}\stackrel{\ast}{F_{cd}}\big)g_{q[a}\stackrel{\ast}{F_{b]}^{\phantom a m}}\Big)t^{q}n_{m},\\
&&H_{\alpha\beta}(t, n)\big\vert_{L}=\frac{1}{2}\Big(\xi_{1}F_{ab}F_{cd}+\xi_{2}\big(\stackrel{\ast}{F_{ab}}F_{cd}+\stackrel{\ast}{F_{ab}}F_{cd}\big)+\xi_{3}\stackrel{\ast}{F_{ab}}\stackrel{\ast}{F_{cd}}\Big)(t.n).
\end{eqnarray*}

\vspace{0.2cm}

We now address the following: What are the conditions on the particular NLED theory (and fields) in order to have at least a pair $(t^a , n_a )$ satisfying the positivity requirement \eqref{positivity-cond}. 
Among the symmetric hyperbolic cases, what are the sets of admissible $t^a$ giving rise to a hyperbolization and what are the associated propagation cones?
The answers to these questions are provided by the following theorems\\

\begin{theorem}\label{theo_1}
The system is symmetric hyperbolic iff the  effective cones (co-cones) have a non-empty intersection, i.e. $C_{\mathfrak{g}_{1}} \cap C_{\mathfrak{g}_{2}} \neq \emptyset $ ($C^{*}_{\mathfrak{g}_{1}} \cap C^{*}_{\mathfrak{g}_{2}} \neq \emptyset $). This will be the case whenever\\ %In addition, there exists an invariant scalar which gives a simple algebraic condition to ensure the cones intersect each other. The condition can be written as
\begin{equation}\label{cod_sh}
\alpha _{1}\beta _{2} = 1+\frac{1}{2}\left( \left( FN_{1}+GN_{2}\right) -%
\sqrt{N_{1}^{2}+N_{2}^{2}}\sqrt{F^{2}+G^{2}}\right)> 0       
\end{equation}
\end{theorem}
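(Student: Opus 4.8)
The plan is to establish the equivalence in two directions, exploiting the linearity of $H_{\alpha\beta}(t,n)$ in both $t^a$ and $n_m$ together with the structural split $H = H|_M + H|_N + H|_L$ derived above, and then reducing everything to the adapted frame of Section II in which $g_{ab}=\eta_{ab}$ and $\vec{E}\parallel\vec{H}$. The key observation is that, because the symmetrizer \eqref{sym} is built covariantly out of $F_{ab}$, $\stackrel{\ast}{F_{ab}}$ and the free vector $t^q$, the positive-definiteness of $H_{\alpha\beta}(t,n)$ as a form on the six-dimensional space of 2-forms should decouple, in that frame, into blocks governed precisely by the coefficients $\alpha_i,\beta_i$ appearing in \eqref{gab_1}--\eqref{alfa_beta1}. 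So the first step is to write $\Phi^{\alpha}H_{\alpha\beta}\Phi^{\beta}$ explicitly in the PND frame, choosing a convenient basis of 2-forms (the three ``electric'' and three ``magnetic'' pairs built from the frame covectors), and show that the quadratic form in the six components organizes into pieces each of which is a product of a component of $t^a$ (or $n_a$) with one of the $\alpha_i,\beta_i$ factors.

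Second, I would run the sufficiency direction: assuming $C_{\mathfrak{g}_1}\cap C_{\mathfrak{g}_2}\neq\emptyset$ (equivalently $\Omega_1\Omega_2>0$, which by the Lemma around \eqref{lemma:alg} forces $\alpha_1\beta_2>0$ and hence $\gamma_1\le\gamma_2$), pick $t^a$ in the common cone and $n_a$ in the common co-cone and verify directly from the frame expression that all the diagonal blocks of $H_{\alpha\beta}(t,n)$ are strictly positive. Proposition \ref{prop:inclusion} is exactly what guarantees that a single choice of $(t,n)$ works simultaneously for both metrics: one cone sits inside the other, so the tighter constraint (the inner cone) can still be met. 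Here I expect to use the explicit null-covector computation from the proof of Proposition \ref{prop:inclusion} ($x_i = (1+\gamma_i^{-1})^{-1}$) to locate an $n_a$ strictly timelike for both effective metrics, and symmetrically a $t^a$.

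Third comes necessity, which I expect to be the main obstacle. Suppose the cones do not intersect, i.e. $\Omega_1\Omega_2<0$ (the case $\alpha_1\beta_2=0$, a degenerate effective metric, being handled in Appendix B). One must show that for every choice of $t^a$ and every $n_a$ the form $H_{\alpha\beta}(t,n)$ fails to be positive definite — i.e. exhibit, for each $(t,n)$, a nonzero 2-form $\Phi^{\alpha}$ with $\Phi^{\alpha}H_{\alpha\beta}\Phi^{\beta}\le 0$. The natural strategy is: positive-definiteness of $H(t,n)$ restricted to suitable two-dimensional sub-blocks of the space of 2-forms forces $n_a$ (and $t^a$) to lie simultaneously in both $C^{*}_{\mathfrak{g}_1}$ and $C^{*}_{\mathfrak{g}_2}$; when $\Omega_1$ and $\Omega_2$ have opposite signs these two co-cones are disjoint (by Proposition \ref{prop_1}, one lives around the $\eta$-cone and the other around the $\sigma$-cone), giving a contradiction. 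The delicate point is to make sure one has extracted \emph{all} the constraints that positivity of $H$ imposes — i.e. that the relevant sub-blocks really do reproduce both $\mathfrak{g}_1^{ab}n_an_b>0$ and $\mathfrak{g}_2^{ab}n_an_b>0$ (and the analogous $t$-conditions from the $H|_M$-type terms), with no slack that could allow positivity without cone-intersection. Finally, the last display \eqref{cod_sh} is immediate once the equivalence is in place: $C_{\mathfrak{g}_1}\cap C_{\mathfrak{g}_2}\neq\emptyset \iff \mathrm{sign}(\Omega_1)=\mathrm{sign}(\Omega_2) \iff \Omega_1\Omega_2 = \alpha_1\beta_2\,\alpha_2\beta_1 > 0$, and since the Lemma gives $\alpha_2\beta_1\ge\alpha_1\beta_2$ together with $-(1-\alpha_1\beta_2)(1-\alpha_2\beta_1)\ge 0$, the product is positive exactly when $\alpha_1\beta_2>0$, which is the stated inequality \eqref{cod_sh} after substituting \eqref{alfa1_beta2}.
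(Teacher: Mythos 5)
Your outline follows the same route as the paper (PND-adapted frame, reduction of positivity to the signs of the $\alpha_i,\beta_i$ of \eqref{alfa_beta1}, the inequality \eqref{lemma:alg}, and Proposition \ref{prop:inclusion} to get a common $(t,n)$), but as it stands it is a plan with the decisive step missing. The necessity direction — which you yourself flag as ``the delicate point'' — is exactly what is not proven: you assert that positive definiteness of $H(t,n)$ on ``suitable two-dimensional sub-blocks'' forces $n_a\in C^{*}_{\mathfrak{g}_1}\cap C^{*}_{\mathfrak{g}_2}$ and $t^a\in C_{\mathfrak{g}_1}\cap C_{\mathfrak{g}_2}$, but you never exhibit those sub-blocks, nor do you rule out a positivity pair $(t,n)$ lying off the PND plane, where the clean block structure you invoke is no longer available. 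The paper closes this hole with two concrete ingredients you would need to reproduce or replace: (i) the determinant factorization \eqref{det_Tnt}, valid for \emph{generic} $(t,n)$, $\det T=\Omega_1\Omega_2\,(\mathfrak{g}_1^{ab}n_an_b)(\mathfrak{g}_2^{ab}n_an_b)(n_at^a)^2(\mathfrak{g}^1_{ab}t^at^b)(\mathfrak{g}^2_{ab}t^at^b)$, which shows the boundary of the positivity set is cut out precisely by the two effective metrics and $n_at^a$; and (ii) a convexity/reflection argument: if a positivity pair existed with $t$ (or $n$) outside both cones, reflecting it through the PND plane would give another admissible direction, and convexity of the positivity cone would force the sum — which lies in the PND plane but in the sector where an eigenvalue vanishes or is negative — to be admissible too, a contradiction. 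Without an argument of this type the ``iff'' in the theorem is not established; only sufficiency of \eqref{cod_sh} would be in reach.

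Two further points on the sufficiency half. First, your claim that the quadratic form ``decouples into blocks, each a component of $t^a$ (or $n_a$) times an $\alpha_i,\beta_i$ factor'' is not what the computation gives: restricted to the PND plane the six eigenvalues are mixed bilinears, $(n_0\mp n_3)(t_0\pm t_3)$ and $(n_0t_0+n_3t_3)\alpha_1\alpha_2$, $(n_0t_0+n_3t_3)\beta_1\beta_2$, so the actual frame computation (or an equivalent one) still has to be carried out; it does not follow from covariance alone. Second, in the configuration $\Omega_1,\Omega_2<0$ (Fig.~\ref{cones:c}), picking $t,n$ in the common cone/co-cone (which is centered on the spacelike $z$-direction) makes the Maxwell block $H\vert_M$ \emph{negative} definite, so the verification ``all blocks strictly positive'' fails literally; one must use the freedom of replacing the symmetrizer $h$ by $-h$, as the paper does. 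This is easily fixed, but it has to be said, since otherwise your sufficiency argument only covers the case $\Omega_1,\Omega_2>0$.
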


\vspace{0.5cm}

\noindent The strategy behind our proof is as follows: write $H_{\alpha \beta}(t, n)$ in a preferred representation and build a mixed matrix $T^{\alpha}_{~\beta}$ (by means of an auxiliary Riemannian inner product $p_{\alpha \beta}$ of $\mathbb{R}^6$), such that:
\begin{equation}
H_{\alpha\beta }=p_{\alpha \gamma }T_{~\beta }^{\gamma }. \label{T}
\end{equation}
In particular, we choose  $p_{\alpha \beta }=\delta _{\alpha
\beta }$, with $\delta_{\alpha\beta}$ the six-dimensional Kronecker delta. We then look for the positiveness of all the eigenvalues of $T_{~~\gamma }^{\alpha }~$ which is essentialy equivalent to the positive definiteness of $H_{\alpha\beta }$\footnote{The reason to introduce $p_{\alpha \beta }$ is simply because it only makes sense to talk about eigenvalues when the operator acts from one vector space to itself. It might sound as a rather technical subtlety, but is not, since there is not unique (or natural) metric to raise the index here. There is an arbitrariness involved on the inner product and the resulting eigenvalues are not going to be covariantly definded in general; one is implicitly introducing a coordinate dependence on the election of a particular  $p_{\alpha \beta }$. Nevertheless, the positive character of the eigenvalues is in fact invariant and it will guarantee the positive definiteness of  $H_{\alpha\beta }$, as long as $p_{\alpha \beta }$ is an inner product.}. It turns out that the 
above requirement imposes a restriction on the allowed values of the background invariants 
(and on the theories, through $\xi_i$): such restriction is precisely  \eqref{cod_sh}, and it is equivalent to the statement that the effective cones do intersect each other. This result reveals an interesting geometrical aspect of NLED regarding hyperbolicity. Also, it provides a simple diagnostic tool for computing whether such geometrical property holds in a particular situation. 
Note that  \eqref{cod_sh} is a local condition which  contains information on the particular theory, but also depends on the background solution and space-time point one is looking at.
Thus, it might be possible to have theories which are symmetric hyperbolic only for a reduced subset of field configurations.\\

%%%%%%%%%%%%%%%%%%%%%%%%%%%%%%%%%%%%%%%%%%%%%%%%%%%%%%%%%%%%%%%%%%%%%%%%%%%%%%%%%%%%%%%%%%%%%%%%%%%%%%%%%%%%%%%%%%%%%%%%%%%%%%%%%%%%%%%%%%%%%%%%%%%%%%%%%%%%%%%%%%%%%%%%%%%%%%%%%%%%

%%%%%%%%%%%%%%%%%%%%%%%%%%%%%%%%%%%%%%%%%%%%%%%%%%%%%%%%%%%%%%%%%%%%%%%%%%%%%%%%%%%%%%%%%%%%%%%%%%%%%%%%%%%%%%%%%%%%%%%%%%%%%%%%%%%%%%%%%%%%%%%%%%%%%%%%%%%%%%%%%%%%%%%%%%%%%%%%%%%%

\begin{theorem}\label{theo_2}
A theory satisfying  \eqref{cod_sh} admits a collection of symmetrizers parametrized by vectors $t^a$ such that $t^a \in C_{\mathfrak{g}_{1}} \cap C_{\mathfrak{g}_{2}} = C_{\mathfrak{g}_{2}} $. Regardless of the particular choice, the resulting propagation cone $C_{H}$ is given by the closure of the union of the effective cones, i.e: $ C_{H} = \bar{C}_{\mathfrak{g}_{1}} \cup \bar{C}_{\mathfrak{g}_{2}} = \bar{C}_{\mathfrak{g}_{1}} $ 
\end{theorem}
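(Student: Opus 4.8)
The plan is to pin down, once and for all, the full set
\[
\mathcal{U}\;:=\;\big\{\,(t^{a},n_{a})\ :\ H_{\alpha\beta}(t,n)=h^{A}{}_{\alpha}(t)\,K_{A}{}^{m}{}_{\beta}\,n_{m}\ \text{is positive definite}\,\big\},
\]
and then read off Theorem~\ref{theo_2} from its two projections. Indeed, a vector $t^{a}$ gives a hyperbolization exactly when the slice $\{n:(t,n)\in\mathcal{U}\}$ is nonempty, i.e.\ when $t^a$ lies in the image of $\mathcal{U}$ under $(t,n)\mapsto t$; and for such a $t$ the cone $C^{*}_{H}$ attached to that hyperbolization is precisely that slice, whose dual is $C_{H}$. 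So everything reduces to the claim
\[
\mathcal{U}\;=\;\big(C_{\mathfrak{g}_{1}}\cap C_{\mathfrak{g}_{2}}\big)\times\big(C^{*}_{\mathfrak{g}_{1}}\cap C^{*}_{\mathfrak{g}_{2}}\big),
\]
up to the simultaneous orientation reversal $(t,n)\mapsto(-t,-n)$, where $C_{\mathfrak{g}_{i}}$, $C^{*}_{\mathfrak{g}_{i}}$ denote the future components fixed by the paper's convention.

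To prove the inclusion ``$\supseteq$'' I would work in the PND frame and use the explicit form of $H_{\alpha\beta}(t,n)$ together with the diagonalization already carried out in the proof of Theorem~\ref{theo_1}, exploiting that $H$ is linear in $t^{a}$ and in $n_{a}$ separately. This direction is the soft one: $\mathcal{U}$ is open; it is nonempty because, evaluating on a reference pair built from a common (co-)timelike direction — whose very existence is the content of Theorem~\ref{theo_1} and Proposition~\ref{prop:inclusion} — positive-definiteness is checked by a direct computation in the frame; and the eigenvalues of $H(t,n)$ produced by the Theorem~\ref{theo_1} analysis are bilinear expressions in $(t,n)$ whose sign can change only as $t^{a}$ crosses $\partial C_{\mathfrak{g}_{i}}$ or $n_{a}$ crosses $\partial C^{*}_{\mathfrak{g}_{i}}$. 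Hence no eigenvalue can reach zero while $(t,n)$ ranges over the connected set $(C_{\mathfrak{g}_{1}}\cap C_{\mathfrak{g}_{2}})\times(C^{*}_{\mathfrak{g}_{1}}\cap C^{*}_{\mathfrak{g}_{2}})$, so positive-definiteness, holding at the reference pair, propagates throughout.

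For the reverse inclusion one must show that once $t^{a}$ leaves $C_{\mathfrak{g}_{1}}\cap C_{\mathfrak{g}_{2}}$ (or $n_{a}$ leaves $C^{*}_{\mathfrak{g}_{1}}\cap C^{*}_{\mathfrak{g}_{2}}$) some eigenvalue of $H$ has already vanished or flipped sign; concretely, one exhibits a 2-form $\Phi^{\alpha}$ with $\Phi^{\alpha}H_{\alpha\beta}(t,n)\Phi^{\beta}\le 0$, the natural witnesses being the polarization 2-forms of plane waves along $\mathfrak{g}_{1}$-null (resp.\ $\mathfrak{g}_{2}$-null) covectors, reflecting the fact that each effective metric controls a definite subsector of the linearized system. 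With $\mathcal{U}$ so described, the theorem assembles itself: the $t$-projection is $C_{\mathfrak{g}_{1}}\cap C_{\mathfrak{g}_{2}}$, which equals $C_{\mathfrak{g}_{2}}$ by Proposition~\ref{prop:inclusion}(ii); and for any admissible $t$ the slice $C^{*}_{H}=\{n:H(t,n)>0\}$ is $C^{*}_{\mathfrak{g}_{1}}\cap C^{*}_{\mathfrak{g}_{2}}=C^{*}_{\mathfrak{g}_{1}}$ by Proposition~\ref{prop:inclusion}(i) — in particular independent of the choice of $t$. Dualizing, and using that the co-cone $C^{*}_{\mathfrak{g}_{1}}$ is dual to $C_{\mathfrak{g}_{1}}$ by construction (so that $C_{H}$, the bidual, is its closure), we get $C_{H}=\bar C_{\mathfrak{g}_{1}}=\bar C_{\mathfrak{g}_{1}}\cup\bar C_{\mathfrak{g}_{2}}$ since $\bar C_{\mathfrak{g}_{2}}\subseteq\bar C_{\mathfrak{g}_{1}}$. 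The degenerate-field case runs identically, with Proposition~\ref{prop:deg} replacing Proposition~\ref{prop:inclusion}.

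The main obstacle is exactly the reverse inclusion: certifying that positive-definiteness is genuinely lost the instant $t^{a}$ or $n_{a}$ exits the relevant cone intersection, with no hidden extra constraint coming from the ``mixed'' $t$--$n$ eigenvalues of $H$. This is where one really needs the explicit six eigenvalues from the Theorem~\ref{theo_1} computation and a careful bookkeeping matching each of them to the correct effective metric and time-orientation; everything else is either soft topology or a direct computation in the PND frame.
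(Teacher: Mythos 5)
Your overall architecture (describe the full positivity set $\mathcal U=\{(t,n):H_{\alpha\beta}(t,n)>0\}$, read the theorem off its two projections, then use Proposition~\ref{prop:inclusion} to identify $C_{\mathfrak{g}_{1}}\cap C_{\mathfrak{g}_{2}}=C_{\mathfrak{g}_{2}}$ and $C^{*}_{\mathfrak{g}_{1}}\cap C^{*}_{\mathfrak{g}_{2}}=C^{*}_{\mathfrak{g}_{1}}$ and dualize to get $C_{H}=\bar C_{\mathfrak{g}_{1}}$) matches the paper's, but the central step is missing in both directions. The paper's proof hinges on one explicit computation that you neither supply nor replace: the factorization of the determinant for \emph{arbitrary} $(t,n)$, $\det T^{\alpha}{}_{\beta}=\Omega_{1}\Omega_{2}\,(\mathfrak{g}_{1}^{ab}n_{a}n_{b})(\mathfrak{g}_{2}^{ab}n_{a}n_{b})\,(n_{a}t^{a})^{2}\,(\mathfrak{g}^{1}_{ab}t^{a}t^{b})(\mathfrak{g}^{2}_{ab}t^{a}t^{b})$, Eq.~\eqref{det_Tnt}, which is what locates the only places an eigenvalue can vanish. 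Your substitute --- that ``the eigenvalues of $H(t,n)$ produced by the Theorem~\ref{theo_1} analysis are bilinear expressions in $(t,n)$ whose sign can change only as $t$ crosses $\partial C_{\mathfrak{g}_{i}}$ or $n$ crosses $\partial C^{*}_{\mathfrak{g}_{i}}$'' --- is not justified: eigenvalues of a matrix that is bilinear in $(t,n)$ are not themselves bilinear, and the six explicit eigenvalues of Theorem~\ref{theo_1} were computed only for $t,n$ lying in the PND plane, so they say nothing about generic pairs. Moreover, even granting the factorization, the hyperplane $t^{a}n_{a}=0$ is also a degeneration locus; to run your continuity argument across the product set you must show $t^{a}n_{a}>0$ there (the paper's Proposition~\ref{prop:tn}, itself a consequence of Proposition~\ref{prop:inclusion}), a point your forward inclusion never addresses.

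The reverse inclusion --- that positivity is genuinely lost as soon as $t$ or $n$ leaves the cone intersection, so that $\mathcal U$ is no larger than the product --- is the part you yourself flag as ``the main obstacle,'' and it is left as a plan (polarization two-forms as witnesses) rather than a proof. In the paper this is settled by combining Eq.~\eqref{det_Tnt} with convexity of the positivity set: an admissible vector (covector) outside both cones (co-cones) would, together with its reflection through the PND plane, force admissibility of a sum lying in the part of the null plane where eigenvalues are already known to be negative, a contradiction. Without either the determinant identity or explicitly constructed negative-energy witnesses, the identification $\mathcal U=(C_{\mathfrak{g}_{1}}\cap C_{\mathfrak{g}_{2}})\times(C^{*}_{\mathfrak{g}_{1}}\cap C^{*}_{\mathfrak{g}_{2}})$ --- and hence the theorem --- is not established. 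The steps you do carry out (projection, the identifications via Proposition~\ref{prop:inclusion}, the biduality giving $C_{H}=\bar C_{\mathfrak{g}_{1}}=\bar C_{\mathfrak{g}_{1}}\cup\bar C_{\mathfrak{g}_{2}}$, and the degenerate case via Proposition~\ref{prop:deg}) are correct, but they are the easy outer shell of the argument.
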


\vspace{0.5cm}

\noindent The starting point underlying the second proof is the construction of an explicit pair ($t_{o}^{a}, n_{a}^{o}$) satisfying the positivity condition. We then show this particular couple necessarily lies in the intersections of the cones (co-cones) engendered by the effective metrics. Then, by looking at the determinant of $T_{~~\gamma }^{\alpha }~$, we will argue how much we can extend the vector $t_{o}^{a}$ and the covector $n_{a}^{o}$ without loosing the positive character of the eigenvalues. It turns out, the limits for extending them are just the closures of the above-mentioned intersections. Finally, once one finds all possible covectors $n_a$, a simple computation yields the propagation cone $C_{H}$, which we find to be 
$C_{H}= \bar{C}_{\mathfrak{g}_{1}} \cup \bar{C}_{\mathfrak{g}_{2}} $ and, as a consequence of proposition \ref{prop:inclusion}, there follows $C_{H}= \bar{C}_{\mathfrak{g}_{1}}$.

%In the next section, we go into the details of the proof; we  find explicit pairs ($t_{o}^{a}, n_{a}^{o}$) and conditions to make $H_{\alpha\beta }(t_o , n^o )$ positive definite. 
%Later, in the following section, we will see how $t_{0}^{a}$ and $n_{a}^{o}$ can be extend (without loosing positivity) and analyze the propagation of the theories.  
%In both sections, the treatment is divided in two parts: the non-degenerate and degenerate cases. That is, when the background field have $F^{2}+G^{2}\neq 0$ or $F=G=0$, respectively.

%%%%%%%%%%%%%%%%%%%%%%%%%%%%%%%%%%%%%%%%%%%%%%%%%%%%%%%%%%%%%%%%%%%%%%%%%%%%%%%%%%%%%%%%%%%%%%%%%%%%%%%%%%%%%%%%%%%%%%%%%%%%%%%%%%%%%%%%%%%%%%%%%%%%%
%%%%%%%%%%%%%%%%%%%%%%%%%%%%%%%%%%%%%%%%%%%%%%%%%%%%%%%%%%%%%%%%%%%%%%%%%%%%%%%%%%%%%%%%%%%%%%%%%%%%%%%%%%%%%%%%%%%%%%%%%%%%%%% PROOF OF MAIN THEOREM
%%%%%%%%%%%%%%%%%%%%%%%%%%%%%%%%%%%%%%%%%%%%%%%%%%%%%%%%%%%%%%%%%%%%%%%%%%%%%%%%%%%%%%%%%%%%%%%%%%%%%%%%%%%%%%%%%%%%%%%%%%%%%%%%%%%%%%%%%%%%%%%%%%%%%

\section{Proof of the main Theorems}

To begin with, note that $H_{\alpha \beta}(t, n)$ is linear w.r.t. $t^a$ and $n_a$. Thus, once a pair $(t_{o}^{a}, n_{a}^{o})$ satisfying $H_{\alpha \beta}(t_o , n_o ) > 0 $ is found, then (by keeping fixed one of them, say $t_{o}^a$), it will exist a neighborhood (around the other, $n_{a}^{o}$) for which $H_{\alpha \beta}$ is still positive definite. These neighborhoods necessarily define open and convex sets which characterize them as cones according to our definition.

\subsection{Non-degenerate $F_{ab}$}

We start by writing $H_{\alpha \beta}(t, n)$ in a convenient representation. Our choice is irrelevant from the conceptual perspective, but important from the operational point of view. We rely on the same frame used in section II-C and consider as a basis for $\mathbb{R}^{6}$ all anti-symmetrized pairs of the covectors $\tau_a , x_a , y_a , z_a $ (see the appendix A).
A well established algebraic result \cite{hoffman1971linear}, states that the positive definiteness of $H_{\alpha \beta}(t, n)$ is equivalent to the positivity of all eigenvalues of a matrix $T^{\alpha}_{~\beta}$, related via \eqref{T}, for any Riemannian inner product  $p_{\alpha \beta }$.

Thus, we seek for a specific pair $(t_{o}^{a}, n_{a}^{o})$, rendering all eigenvalues positive. Unfortunately, we were not able to explicitly calculate them for the most general pair. However, assuming that $t_{o}^{a}$ and $n_{a}^{o}$ lie within the plane defined by the principal null directions, calculations simplify considerably.
In the chosen frame this hypothesis is equivalent to,
\begin{equation}
 t_{o}^{a}=\left( t_{0},0,0,t_{3}\right) \quad \text{ ; }  \quad n_{a}^{o}=\left(n_{0},0,0,n_{3}\right)  \nonumber
\end{equation}
%Such restriction needs further justification. We will defer it till the end of this section, where.. 
For such pairs, the six eigenvalues $\lambda_i$ were computed using Mathematica, and are given by,
\begin{eqnarray}
\lambda _{1,2} &=&\left( n_{0}-n_{3}\right) \left(
t_{0}+t_{3}\right)  \label{lambdas1} \nonumber \\
\lambda _{3,4} &=&\left( n_{0}+n_{3}\right) \left(
t_{0}-t_{3}\right)  \nonumber \\
\lambda _{5} &=&\left( n_{0}t_{0}+n_{3}t_{3}\right)
\alpha _{1}\alpha _{2} \nonumber \\
\lambda _{6} &=&\left( n_{0}t_{0}+n_{3}t_{3}\right)
\beta _{1}\beta _{2} \nonumber
\end{eqnarray}
Consequently, $\lambda_5$ and $\lambda_6$ will be positive if $ sign(\alpha_1 \alpha_2 ) = sign(\beta_1 \beta_2)$. Equivalently, we have
\begin{equation}
 0 < \alpha_1 \alpha_2 \beta_1 \beta_2=\Omega_1 \Omega_2 \nonumber
\end{equation}
These relations imply the following: i) $sign(\Omega_1) = sign(\Omega_2)$ which, as we have seen, means that the two cones have a non-empty intersection; ii) recalling that $\alpha_{1}\beta_{2} \leq 1 \leq \alpha_{2}\beta_{1}$ (from lemma 1 in section II-C), one has $\alpha_1 \beta_2 >0$, which is precisely expression \eqref{cod_sh} from theorem 1. Under these assumptions, there are only two possible cases to analyze: $\Omega_i > 0 $, which corresponds to figure \ref{cones:a} (section II-C); 
$\Omega_i < 0 $, as illustrated in figure \ref{cones:c}.
These two cases are considered separately below,
\begin{enumerate}
 \item If $\Omega_i >0 $, we choose $t_{o}^{a}=\left(1,0,0,0\right) $ and $n_{a}^{o}=\left( 1,0,0,0\right) $. And we get, 
\begin{equation}
 \lambda_{1,2,3,4}=1 \quad \text{;} \quad  \lambda _{5} =\alpha_{1}\alpha_{2} > 0 \quad \text{;} \quad \lambda _{6} =\beta_{1}\beta_{2} > 0\nonumber
\end{equation}
 \item If $\Omega_i <0 $, we choose $t_{o}^{a}=\left(0,0,0,1\right) $ and $n_{a}^{o}=\left( 0,0,0,1\right) $. 
Obtaining, (once the symmetrizer is multiplied by $(-1)$),
\begin{equation}
 \lambda_{1,2,3,4}=1 \quad \text{;} \quad  \lambda _{5} = -\alpha_{1}\alpha_{2} > 0 \quad \text{;} \quad \lambda _{6} = -\beta_{1}\beta_{2} > 0\nonumber
\end{equation}
\end{enumerate}
%In order to prove that \eqref{cod_sh} is a \textit{sufficient condition} for the system to be symmetric hyperbolic, it is enough to find just one pair 
%$(t_{o}^{a}, n_{a}^{o})$ satisfying $\lambda_i > 0 \text{  ,  }\forall i=1,2,..,6 $. \\
%As recently shown, condition \eqref{cod_sh} is equivalent to $sign(\Omega_1) = sign(\Omega_2)$, and thus, 

%
%So far, we have shown there is a pair $(t_{o}^{a}, n_{a}^{o})$ satisfying: $\lambda_i > 0 \text{  ,  }\forall i=1,2,..,6 $. 
%\textbf{Leco: Saqué una frase aquí quizá la podemos poner luego, para hacer más énfasis en que ya tenemos la  mitad del resultado.}
Thus, we conclude that \eqref{cod_sh} is a \textit{sufficient condition}
for the system to be symmetric hyperbolic. To prove it is also \textit{necessary}, we still need to justify the restriction of $(t_{o}^{a}, n_{a}^{o})$ to those lying on the PND's plane. We will return briefly to complete this part of the proof. For the time being, we now concentrate on the question regarding how far the neighborhoods of $(t_{o}^{a}$, and $n_{a}^{o})$ can be extended, while still preserving the positivity condition. We already discard the space-like part (with respect to the effective metrics) of the null plane, for in this case some eigenvalues become negative.

The determinant of $T^{\alpha}_{~\beta}$ was calculated using Mathematica and is given by,
\begin{eqnarray} \label{det}
\det \left( T_{~\beta }^{\alpha }\right)
&=&\Omega _{1}\Omega _{2}\left( \mathfrak{g}_{1}^{ab}n_{a}n_{b}\right) \left(\mathfrak{g}_{2}^{ab}n_{a}n_{b}\right) \left( n_{a}t^{a}\right) ^{2}
\left(\mathfrak{g}^{1}_{ab}t^{a}t^{b}\right) \left( \mathfrak{g}^{2}_{ab}t^{a}t^{b}\right)  \label{det_Tnt}\\
&=& \lambda _{1}\lambda _{2}...\lambda _{6} \nonumber
\end{eqnarray}
for generic vector $t^a$ and covector $n_a$. This result is at the base of the following discussions.

%Because $H_{\alpha \beta }$ is symmetric and real, all lambdas are real and there are always a complete basis of eigenvectors.
First, notice that the vector $t_{o}^{a}$ (and covector $n_{a}^{o}$) we find above, belongs to the intersection of the two cones (co-cones). In other words, $t_{o}^{a} \in C_{\mathfrak{g}_{1}} \cap C_{\mathfrak{g}_{2}}$ and $n_{a}^{o} \in C^{*}_{\mathfrak{g}_{1}} \cap C^{*}_{\mathfrak{g}_{2}} $
%%%%%%%%%%%%%%
\footnote{In our frames, this can be seen in a rather direct way for each of the two situations, namely, $\Omega_i >0 $ and $\Omega_i < 0$.
That is, why we believe it is not necessary to further justify this statement.}.
%%%%%%%%%%%%%%
Now, whenever an eigenvalue becomes zero, one of the following must hold: (a) $n_a$ is a null covector of either $\mathfrak{g}_{1}^{ab}$ or  $\mathfrak{g}_{2}^{ab}$;
(b) $t^a$ is a null vector of either $\mathfrak{g}^{1}_{ab}$ or $\mathfrak{g}^{2}_{ab}$; (c) $t^a n_a = 0$.

\begin{proposition}
If $t^a \in C_{\mathfrak{g}_{1}} \cap C_{\mathfrak{g}_{2}} $ and $n_a \in C^{*}_{\mathfrak{g}_{1}} \cap C^{*}_{\mathfrak{g}_{2}} $, then  $t^a n_a > 0 $. \label{prop:tn}
\end{proposition}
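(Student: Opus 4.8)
The plan is to read the claim straight off the definition of the co-cones, using only that the hypotheses place $t^a$ and $n_a$ simultaneously in the cone and the co-cone of \emph{the same} effective metric. First I would note that $t^a\in C_{\mathfrak{g}_1}\cap C_{\mathfrak{g}_2}$ gives in particular $t^a\in C_{\mathfrak{g}_1}$, and $n_a\in C^{*}_{\mathfrak{g}_1}\cap C^{*}_{\mathfrak{g}_2}$ gives in particular $n_a\in C^{*}_{\mathfrak{g}_1}$. Now $C^{*}_{\mathfrak{g}_1}$ was defined as the interior of $\{\,m_a\in T^{*}_p\mathbf{M}:\ m_a v^a>0\ \ \forall\,v^a\in C_{\mathfrak{g}_1}\,\}$; since the interior is contained in that set, every covector of $C^{*}_{\mathfrak{g}_1}$ pairs strictly positively with every vector of $C_{\mathfrak{g}_1}$. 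Taking $v^a=t^a$ yields $n_a t^a>0$, which is the assertion. Equivalently, one may first use that the (tacit) hypothesis $C_{\mathfrak{g}_1}\cap C_{\mathfrak{g}_2}\neq\emptyset$ lets Proposition \ref{prop:inclusion} collapse the intersections, $C_{\mathfrak{g}_1}\cap C_{\mathfrak{g}_2}=C_{\mathfrak{g}_2}\subseteq C_{\mathfrak{g}_1}$ and $C^{*}_{\mathfrak{g}_1}\cap C^{*}_{\mathfrak{g}_2}=C^{*}_{\mathfrak{g}_1}$, so the data really are $t^a\in C_{\mathfrak{g}_2}\subseteq C_{\mathfrak{g}_1}$ and $n_a\in C^{*}_{\mathfrak{g}_1}$, and one concludes as before.

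The one point I would spell out carefully is why the pairing is \emph{strictly} positive rather than merely non-negative: this is already encoded in the definition of $C^{*}_{\mathfrak{g}_i}$, whose defining set imposes the open condition $m_a v^a>0$ over the \emph{open} cone $C_{\mathfrak{g}_i}$ (equivalently over $\bar C_{\mathfrak{g}_i}\setminus\{0\}$, since $C_{\mathfrak{g}_i}$ is a proper convex cone), so no covector of the co-cone can annihilate a nonzero vector of the cone. A purely computational check is also available as a sanity test: in the adapted frame of Section II-C, when $\Omega_i>0$ both $C_{\mathfrak{g}_i}$ and $C^{*}_{\mathfrak{g}_i}$ are open cones about the $0$-axis, so the components of $t^a$ and of $n_a$ are dominated by their $0$-entries, which share a sign, making $n_a t^a>0$ manifest; the case $\Omega_i<0$ is identical about the $3$-axis. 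I would present the dual-cone argument as the proof and relegate the frame computation to a remark.

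There is essentially no obstacle here — the statement is an unpacking of definitions — and the only care needed is the interior/closure bookkeeping above (together with the implicit nonemptiness of the intersections, outside of which the statement is vacuous). Its role is structural: combined with the determinant formula \eqref{det_Tnt}, it rules out alternative (c), $t^a n_a=0$, whenever $(t^a,n_a)$ range over the intersections $C_{\mathfrak{g}_1}\cap C_{\mathfrak{g}_2}$ and $C^{*}_{\mathfrak{g}_1}\cap C^{*}_{\mathfrak{g}_2}$, so that $\det(T_{~\beta}^{\alpha})$ can only vanish through alternative (a) or (b), i.e.\ precisely when $(t^a,n_a)$ reaches the boundary of one of the effective cones — which is exactly what is needed to locate how far the neighbourhoods of $(t_o^a,n_a^o)$ can be pushed.
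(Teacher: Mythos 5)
Your argument is correct. Both you and the paper ultimately just unpack the definition of the dual co-cone, but your primary route is slightly leaner: since the hypotheses already place $t^a$ in $C_{\mathfrak{g}_1}$ and $n_a$ in $C^{*}_{\mathfrak{g}_1}$, the strict positivity $n_a t^a>0$ follows from the definition of $C^{*}_{\mathfrak{g}_1}$ alone, with no need for the inclusion result. The paper instead first invokes Proposition~\ref{prop:inclusion} to collapse $C_{\mathfrak{g}_1}\cap C_{\mathfrak{g}_2}=C_{\mathfrak{g}_2}$ and $C^{*}_{\mathfrak{g}_1}\cap C^{*}_{\mathfrak{g}_2}=C^{*}_{\mathfrak{g}_1}\subseteq C^{*}_{\mathfrak{g}_2}$, and then pairs $t^a\in C_{\mathfrak{g}_2}$ with $n_a\in C^{*}_{\mathfrak{g}_2}$ via the same dual-cone definition — which is exactly the ``equivalently'' variant you mention. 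What the paper's phrasing buys is the explicit identification of the admissible sets ($C_{\mathfrak{g}_2}$ for $t^a$, $C^{*}_{\mathfrak{g}_1}$ for $n_a$), which it reuses immediately afterwards to read off $C_H$; what your version buys is independence from Proposition~\ref{prop:inclusion} (so the statement holds whenever the intersections are nonempty, by definition alone), plus a clean justification of strictness, which the paper leaves at ``recalling the definition of dual cone.'' Your closing remark on the role of the proposition — eliminating alternative (c) in the vanishing of $\det\left(T^{\alpha}_{~\beta}\right)$ — matches the paper's use of it.
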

\begin{proof}
We rely on previous results from section II-C. In particular, proposition \ref{prop:inclusion}, which states: 
 (i) $C^{*}_{\mathfrak{g}_{1}} \subseteq C^{*}_{\mathfrak{g}_{2}}  $ and (ii) $ C_{\mathfrak{g}_{2}} \subseteq C_{\mathfrak{g}_{1}}$. Thus, 
 \begin{eqnarray}
  C_{\mathfrak{g}_{1}} \cap C_{\mathfrak{g}_{2}} &\equiv& C_{\mathfrak{g}_{2}} \nonumber\\
  C^{*}_{\mathfrak{g}_{1}} \cap C^{*}_{\mathfrak{g}_{2}}   &\equiv& C^{*}_{\mathfrak{g}_{1}}  \subseteq C^{*}_{\mathfrak{g}_{2}} \nonumber
\end{eqnarray}
It is straightforward to see now that any covector 
$n_a \in C^{*}_{\mathfrak{g}_{1}} \subseteq C^{*}_{\mathfrak{g}_{2}}$ 
will satisfy (recalling the definition of dual cone) that $t^a n_a > 0 $, for any vector 
$t^a \in C_{\mathfrak{g}_{2}} $.
\end{proof}
From this result we conclude that the set of all the $t^a$ for which the symmetrizer is positive is just $  C_{\mathfrak{g}_{1}} \cap C_{\mathfrak{g}_{2}} \equiv C_{\mathfrak{g}_{2}} $; 
and similarly, that the set of all $n_a$ is given by $ C^{*}_{\mathfrak{g}_{1}} \cap C^{*}_{\mathfrak{g}_{2}} \equiv C^{*}_{\mathfrak{g}_{1}}$. As a corollary, and according to Geroch's definition of a \textit{propagation cone} (we gave in section III, def. 2), we find:
\begin{equation}
 C_{H} = \bar{C}_{\mathfrak{g}_{1}} \equiv \bar{C}_{\mathfrak{g}_{1}} \cup \bar{C}_{\mathfrak{g}_{2}}  
\end{equation}

\noindent This concludes the proof of theorem 2 for a non-degenerate $F_{ab}$. To complete the demonstration of THM 1 we proceed by contradiction. \\

To show \eqref{cod_sh} is also a \textit{necessary condition} for (symmetric) hyperbolicity, we assume $C_{\mathfrak{g}_{1}} \cap C_{\mathfrak{g}_{2}} = \emptyset$
and that there exists a pair $(t_{o}^{a}, n_{a}^{o})$ satisfying: 
$\lambda_i > 0 \text{  ,  }\forall i=1,2,..,6 $. 
This will lead us to a contradiction. Indeed, if $t_{o}^a$ (or $n_{a}^{o}$) lies outside both cones (co-cones), then the resulting set of allowed vectors (covectors) acoording to \eqref{det_Tnt} does not configure a convex cone. In fact, Eq. \eqref{det_Tnt} directly implies that, if such a vector (covector) existed one would be able to keep the positivity of the symmetrizer by reflecting the corresponding vector through the  preferred plane. However, it turns out that their sum would fall into a part of the null plane in which the system is not hyperbolic. Thus, the set of vectors wouldn't characterize a cone according to our definition, which is a contradiction. Therefore, both $t_{o}^a$ and $n_{a}^{o}$ must belong to one of their respective cones/co-cones. The contradiction came from assuming one could find a pair $(t_{o}^{a}, n_{a}^{o})$ satisfying: $\lambda_i > 0 \text{  ,  }\forall i=1,2,..,6 $ for the cases in which the cones do not intersect. Thus, \eqref{cod_sh} is also a \textit{necessary condition} which finishes the proof of THM 1 (for the non-degenerate cases).%In this case, since every cone (dual cone) intersect the plane defined by the principal null directions, 
%it would be always possible to find a pair $(t_{o}^{a}, n_{a}^{o})$ within the plane where the eigenvalues are still positive.  But this is impossible according to the explicit expression
%we have found for the eigenvalues in such cases. 

%%%%%%%%%%%%%%%%%%%%%%%%%%%%%%%%%%%%%%%%%%%%%%%%%%%%%%%%%%%%%%%%%%%%%%%%%%%%%%%%%%%%%%%%%%%%%%%%%%%%%%%%%%%%%%%%%%%%%%%%%%% DEG

\subsection{Degenerate $F_{ab}$}

In order to complete our proofs, we consider here the cases where the background field is degenerate. It turns out that these cases are much simpler, since (as we have already seen) there is always a non-empty intersection of the two cones. 
Thus, it only remains to be proven that there exist a common timelike vector $t_{o}^a$ and covector $n_{a}^{o}$ (respect to both effective metrics)
for which the six eigenvalues $\lambda_i$ of $T^{\alpha}_{~\beta}$ are positive. We can do this explicitly, using the results from section II-C. \\ 
Recall that a common time-like vector and covector were found, which in our particular frame reads
\begin{equation}
 t_{o}^{a}=\left(1,0,0,0\right) \quad \text{ ; } \quad n_{a}^{o}=\left( 1,0,0,0\right)  \nonumber
\end{equation}
Computing the eigenvalues for this pair one obtains,
\begin{eqnarray*}
\lambda_{1}  &=&1+\varepsilon ^{4}\left\vert b_{1}\right\vert > 0 \\
\lambda_{2} &=&1-\varepsilon ^{4}\left\vert b_{1}\right\vert > 0 \\
\lambda_{3} &=&1+\varepsilon ^{4}\left\vert b_{2}\right\vert > 0 \\
\lambda_{4} &=&1-\varepsilon ^{4}\left\vert b_{2}\right\vert > 0 \\
\lambda_{5} &=&\lambda_{6} = 1 
%\lambda_{6} &=&1
\end{eqnarray*}%
Then, for the second part of the proof, we follow the same lines as for the non-degenerate case. 
Notice that expressions \eqref{det_Tnt} and proposition \ref{prop:tn} also apply in the degenerate case. In addition, we have provided (in section II-C) with the analog of proposition \ref{prop:inclusion}, namely, proposition \ref{prop:deg}. Therefore, everything follows identically as before and as such concludes the proofs of theorems 1 and 2.
%in the same way:  $ C_{\mathfrak{g}_{1}} \cap C_{\mathfrak{g}_{2}} \equiv C_{\mathfrak{g}_{2}} $ is set of all the ``allowed'' vectors;
%and the set of all ``allowed'' covectors is just $ C^{*}_{\mathfrak{g}_{1}} \cap C^{*}_{\mathfrak{g}_{2}} \equiv C^{*}_{\mathfrak{g}_{1}}$.
%Finally, we will find again $s^{*} = \bar{C}_{\mathfrak{g}_{1}} \cup \bar{C}_{\mathfrak{g}_{2}} $.

%%%%%%%%%%%%%%%%%%%%%%%%%%%%%%%%%%%%%%%%%%%%%%%%%%%%%%%%%%%%%%%%%%%%%%%%%%%%%%%%%%%%%%%%%
%%%%%%%%%%%%%%%%%%%%%%%%%%%%%%%%%%%%%%%%%%%%%%%%%%%%%%%%%%%%%%%%%%%%%%%%%%% CONSTRAINTS
%%%%%%%%%%%%%%%%%%%%%%%%%%%%%%%%%%%%%%%%%%%%%%%%%%%%%%%%%%%%%%%%%%%%%%%%%%%%%%%%%%%%%%%%%

\section{Constraints}

The symmetrizer $h^{A}_{\phantom a\alpha}$ may be understood as map from the space of equations (indexed by $``A"$) to the space of unknowns (indexed by $``\alpha"$). In other words, it selects from the entire set of first order equations some combinations of equations which we can evolve along some direction which we usually relate to time. What are the remaining equations the symmetrizer does not capture? For the system to be consistent they should not be of the evolution type, for in that case they are either linear combination of the ones already selected by the symmetrizer or they would be incompatible with the previously chosen evolution. In other words, they must be satisfied automatically once they are satisfied initially i.e. they should be what we normally call the constraints. In Geroch's formalism a constraint is a tensor $c^{An}$ such that
\begin{equation}\label{const}
c^{A(n}K_{A \phantom a \alpha}^{\phantom a m)}=0
\end{equation}
When this formalism is applied to the equations of nonlinear electrodynamics we obtain a linear space of constraints characterized by vectors $(x,y)$ in $\mathbb{R}^{2}$ of the form $c^{An}=(xg^{an},\ yg^{an})$. 
To check that this $c^{An}$ does indeed satisfy $(\ref{const})$, we combine it with the principal symbol to obtain
\begin{equation}
c^{An}K_{A \phantom a \alpha}^{\phantom a m}=-x\left\{\frac{1}{2}g^{nmbc}+F^{nm}(\xi_{1}F^{bc}+\xi_{2}\stackrel{\ast}{F^{bc}})+\stackrel{\ast} {F^{nm}}(\xi_{2}F^{bc}+\xi_{3}\stackrel{\ast}{F^{bc}})\right\}+\frac{1}{2}y\eta^{nmbc},
\end{equation}
which is anti-symmetric in the quantities $n$ and $m$. The constraints are complete in the sense that the dimension of evolution equations provided by the symmetrizer plus the dimension of constraint equations give the correct number of PDEs. Note that, contrarily to $K_{A \phantom a \alpha}^{\phantom a m}$, the tensor $c^{An}$ does not depend on the electromagnetic field $F_{ab}$ and coincides with those of the linear theory. 
We now show that this two-dimensional system is integrable i.e. that it satisfies
\begin{equation}
\nabla_{n} (c^{An}K_{A \phantom a \alpha}^{\phantom a m}\nabla_{m}\Phi^{\alpha})=\nabla_{n}(c^{An}K_{A \phantom a \alpha}^{\phantom a m})\nabla_{m}\Phi^{\alpha}+ \nabla_{n}\nabla_{m}(c^{An}K_{A \phantom a \alpha}^{\phantom a m}\Phi^{\alpha})=0.
\end{equation}
identically. 
% The resulting equation is of the form\\
% \begin{equation}
% \nabla_{n}(c^{An}K_{A \phantom a \alpha}^{\phantom a m})\nabla_{m}\Phi^{\alpha}+ \nabla_{n}\nabla_{m}(c^{An}K_{A \phantom a \alpha}^{\phantom a m}\Phi^{\alpha})=0.
% \end{equation}
The second-derivative term drops out, as a consequence of \eqref{const}, and so we are
left with an algebraic equation in the first derivatives, of the form.
\begin{equation}\label{eikon}
 \frac{\partial}{\partial\Phi^{\beta}} \left(c^{An}K_{A \phantom a \alpha}^{\phantom a m}\right)\nabla_{m}\Phi^{\alpha}\nabla_{n}\Phi^{\beta}=0.
\end{equation}
This is an integrability condition for the equations as a whole. If it holds as a trivial algebraic consequence of the equations of motion we say that our constraint is integrable. To show that this is indeed the case for any NLE derived from a Lagrangian it is convenient to rewrite the equations of motion in the form
\begin{equation}\label{neweq}
\nabla_{m}(\tilde{K}_{A \phantom a \alpha}^{\phantom a m}\Phi^{\alpha})=0
\end{equation}
with $\tilde{K}_{A \phantom a \alpha}^{\phantom a m}\ \rightarrow\ \frac{1}{2}\left(\mathcal{L}_{F}\ g_{a}^{\phantom a mbc}+\mathcal{L}_{G}\ \eta_{a}^{\phantom a mbc},\ \eta_{a}^{\phantom a mbc}\ \right)$. 
Multiplying by $c^{An}\nabla_{n}$ and noting that $c^{An}$ commutes with the derivatives we obtain for the left-hand side
\begin{equation}
\nabla_{n}\nabla_{m}(c^{An}\tilde{K}_{A \phantom a \alpha}^{\phantom a m}\Phi^{\alpha}).
\end{equation}
Now, because $c^{An}\tilde{K}_{A \phantom a \alpha}^{\phantom a m}$ is also anti-symmetric in $n$ and $m$, this quantity is identically zero. 
Thus, equation \eqref{eikon} holds trivially and the constraints are therefore integrable, i.e. they remain true on the whole domain of dependence provided they were so at initial time. When the constraints are integrable, by studying the properties of the compound hyperbolic PDE system it follows that all the well-posedness results for symmetric hyperbolic systems also apply in the presence of constraints.

%%%%%%%%%%%%%%%%%%%%%%%%%%%%%%%%%%%%%%%%%%%%%%%%%%%%%%%%%%%%%%%%%%%%%%%%%%% 

%EXAMPLES

\section{Examples}
In this section we discuss several examples illustrating the power of the results discussed so far. The latter serve as simple diagnostic tools in testing whether well-posedness (hyperbolicity) holds and what are the associated physical propagation cones. In particular, if $\alpha _{1}\beta _{2}>0$, theorem \eqref{theo_1} guarantees a non-trivial intersection between the effective cones and so, hyperbolicity. When this condition is fulfilled we shall check whether these cones intersect the background cone or not and whether the propagation speeds (given by $\bar C_{\mathfrak{g}_{1}}$) are sub or super-luminal.
 
We will perform this analysis based on $\gamma_{1}$ for non-degenerate cases and $b_{1}$ for degenerate ones. As can be seen from propositions \eqref{prop_1} and  \eqref{prop:inclusion},  when  $\gamma_{1} < 1$, then $\bar C_{\eta} \ (or \ \bar C_{\sigma}) \subseteq \bar C_{\mathfrak{g}_{1}}$
; when $\gamma_{1}=1$, then $ \bar C_{\mathfrak{g}_{1}} = \bar C_{\eta} \ (or \ \bar C_{\sigma})$ the effective metric becomes in the background (of a rotation of it)
; finally in the sub-luminal case $ 1 < \gamma_{1}$  then $ \bar C_{\mathfrak{g}_{1}} \subseteq \bar C_{\eta} \ or \ (\bar C_{\sigma})$. 
Notice that since all the metrics share at least one null direction along that direction all propagation speeds coincide, so we always have some directions with speed of light propagation. 

For degenerate cases, the causal possibilities are giving by $b_{1}$, as it can be seen from  proposition \eqref{prop:deg} if $b_{1} < 0$ then $\bar C_{\mathfrak{g}_{1}} \subseteq \bar C_{\eta}$; if $b_{1} = 0$ then $\bar C_{\eta} = \bar C_{\mathfrak{g}_{1}}$; if $0 < b_{1}$ then $\bar C_{\eta} \subseteq \bar C_{\mathfrak{g}_{1}}$

In the next examples we shall use a frame in which $\vec{E}\parallel \vec{H}$ \eqref{coord_sistem1}, in that case, 
$\frac{\sqrt{F^{2}+G^{2}}-F}{4}=E^{2}$ and $\frac{\sqrt{F^{2}+G^{2}%
}+F}{4}=H^{2}$.

\subsection{Born-Infeld}

The Born-Infeld theory is the paramount example of non-linear electrodynamics. It was proposed to remove the divergence of the electron`s self energy at the classical level. The idea was to use a non-linear
generalization to Maxwell`s theory, which deviates from it at very strong fields. It naturally introduces a cut off $\beta $ limiting of maximum electric fields around a static charge, thus avoiding the singularity at $r\rightarrow 0.$ 
The Lagrangian expression is,     
\[
\mathcal{L} =\beta ^{2}\left( \sqrt{1+\frac{F}{2\beta ^{2}}-\frac{G^{2}}{%
16\beta ^{4}}}+1\right) 
\]

One interesting property of this theory is the absence of birefringence,
since $N_{1}=N_{2}=0,$  \cite{Obu} the two effective metrics are identical and so are the propagation speeds of all physical modes.
In this case the cones intersection is obvious, and the systems is symmetric hyperbolic. In addition the propagation cone is contained in the light cone of the space-time metric, so propagation speeds are lower or equal to the speed of light.

\begin{itemize}
\item \bigskip Non-degenerate case~$F^{2}+G^{2}\neq 0$

Before starting the hyperbolicity analysis it is important to note that when $\beta ^{2}\rightarrow \frac{\sqrt{F^{2}+G^{2}}-F}{4}$, 
$\frac{\partial \mathcal{L} }{\partial F}\rightarrow \infty $ 
and if 
$\beta^{2}<\frac{\sqrt{F^{2}+G^{2}}-F}{4}$ 
then 
$\mathcal{L} $ became complex.
Thus, we shall restrict attention to the range  $\frac{\sqrt{F^{2}+G^{2}}-F}{4}=E^{2}<\beta ^{2}$,  which is consistent with the original
idea of a limited electric field strength in the Born-Infeld theory.

The effective metrics are conformally related, 
$g_{2}^{ab}=\left( 
\frac{-G^{2}+16\beta ^{4}+8F\beta ^{2}}{4\left( 2\beta ^{2}+F\right) ^{2}}%
\right) g_{1}^{ab}$
and it easy to check that $0<\alpha _{1}\beta _{2}=1$ (and $0<\alpha _{1}\alpha
_{2}\beta _{1}\beta _{2}=1)$ so, as we have said, the theory is symmetric hyperbolic.

In addition  $\Omega _{1}=\frac{1}{\Omega _{2}}=4\frac{\left( 2\beta ^{2}+F\right) ^{2}%
}{16\left( \beta ^{2}+H^{2}\right) \left( \beta ^{2}-E^{2}\right) }>0$ 
thus when
$E^{2}<\beta ^{2}$,  the effective metrics have cone intersections with the
background metric.\\
We compute now 
\begin{eqnarray*}
\gamma _{1}^{2} &=&\left\vert \frac{\left( 1+\frac{H^{2}}{\beta ^{2}}\right) 
}{\left( 1-\frac{E^{2}}{\beta ^{2}}\right) }\right\vert 
\end{eqnarray*}%
and conclude that $1<\gamma _{1}^{2}$
i.e. it is a sub-luminal case, only propagation speeds up light are allowed.

In the limit of weak fields, $
E^{2}, H^{2}<<\beta ^{2}$, $\gamma _{1} \to 1$, so, as expected, the theory becomes closer to Maxwell's.

\item Degenerate case~$F^{2}+G^{2}=0$

As we have established, degenerate cases are always hyperbolic, we can check the cone intersection. 
In this case, using \eqref{gab_deg},  $g_{1}^{ab}=g_{2}^{ab}$, with $b_{1,2}=-\frac{1}{\beta ^{2}}<0$ the propagation speeds are lower than light.
\end{itemize}

\subsection{Toy Model 1}

The present example, like the Born-Infeld one, has only one effective metric ($N_{1}=N_{2}=0$), so it is symmetric hyperbolic for any value of the fields, but unlike the former the effective metric cone never intersects the background metric cone (see figure  \ref{cones:c}). Its Lagrangian is given by,

\[
\mathcal{L} =\frac{F}{G} 
\]

The theory is not defined for $G=0$ since $ \xi _{2}, \xi _{3}, a, Q$, and $R$ blow up in this limit, so the degenerate case will not be discussed. 

The effective metric definitions \eqref{effective_met} we are using also go bad at $F=0$, but they can be re-scaled (see footnote  \ref{foot}) so that they become finite. 
We can use that new effective metrics in place of the former without affecting the results of this paper. 
We obtain, $\bar g_{2}^{ab}=\left( -\frac{16}{G^{2}}\right) \bar g_{1}^{ab}$ so they are conformally related, and therefore share the same cone implying the system is symmetric hyperbolic. In addition $\bar \Omega _{1}=\frac{1}{\bar \Omega _{2}}=- H^{2}E^{2} <0$, so the effective metrics cone have no
intersection with the background metrics cone. This implies that initial data must be given in a time-like hyper-surface with respect the background metric, but space-like for the effective metrics, evolution occurs in temporal directions with respect to the last one.

\subsection{Electrodynamics from Kaluza-Klein theory}

We shall analyze three examples introduced in \cite{gibbons2001born}. 
The authors start from the Kaluza Klein metrics in $5=d+1$ dimensions, and adding a Gauss-Bonet terms (that in five dimension is not a topological invariant) to the Einstein Hilbert action. They arrive to an effective electrodynamics theory in $d=3+1$
dimensions.
The resulting Lagrangian is
\begin{equation}\label{2a}
\mathcal{L} =-\frac{1}{4}F+\frac{1}{16}\gamma \left( \left( b-1\right) F^{2}-%
\frac{3}{2}G^{2}\right)   %\tag{2}
\end{equation}
with $\gamma $ a perturbation parameter from the Lagrangian, associated with  Gauss-Bonet terms, it will be interpreted as a function of physical quantities 
$\left( e,m_{e},\hslash ,c\right) $ and $b$ a parameter associated to a term
in the action quadratic in Ricci scalar. This last parameter is chosen in
order to avoid ghost propagation, and will give rise to very different electromagnetic theories.

\bigskip The Symmetric Hyperbolic condition is
\[
\alpha _{1}\alpha _{2}\beta _{1}\beta _{2}=\frac{\left( \left( F\left(
1-4b\right) \gamma +4\right) ^{2}-\left( 5\gamma -2b\gamma \right)
^{2}\left( F^{2}+G^{2}\right) \right) }{\left( 2F\gamma -2Fb\gamma +4\right).
^{2}}>0 
\]
We shall check under which conditions this is fulfilled in the following examples.

\subsubsection{\protect Gauss-Bonnet electrodynamics}

For this theory $b=1$ and it is interpreted as the first
order string theory corrections to the general relativity \cite{zwiebach1985curvature}

The Symmetric hyperbolic condition becomes
\begin{equation}
\alpha _{1}\alpha _{2}\beta _{1}\beta _{2}=\left( 1-3H^{2}\gamma \right)
\left( 1+3\gamma E^{2}\right) >0  \label{1}
\end{equation}

Because of the $sign(\gamma)$ is not defined, and this give rise to different hyperbolicity conditions  and different effective metrics, we need to study each case in particular.

\begin{itemize}
\item If $\gamma >0$ condition \eqref{1} implies $H^{2}<\frac{1}{3\gamma }$ 

\begin{itemize}

\item Non degenerate case 

The effective metrics have cone intersections with background metric ones because
\[
\Omega _{1}=\left( 1-3H^{2}\gamma \right) \left( 1+3\gamma E^{2}\right) >0
\]
In addition, propagation faster than light are allowed, this is so even for the degenerate case, as we will check
\begin{equation}
1>\gamma _{1}=\allowbreak \frac{%
1-3H^{2}\gamma }{1+3\gamma E^{2}}>0
\end{equation}
\end{itemize}

\begin{itemize}
\item Degenerate case
\[
\begin{array}{ccc}
b_{1}=3\gamma >0 %&  & b_{2}=0%
\end{array}%
\]
\end{itemize}

\item If $\gamma <0$ the dominant energy condition is satisfied and
condition (\ref{1}) implies $E^{2}<\frac{1}{3\left\vert \gamma \right\vert }$
in concordance to Gibbons and Herdeiro.

The effective metric $\mathfrak{g}_{1}$ is conformal to the background metric, so they have the same cone

\begin{itemize}
\item Non degenerate case:  $g_{1}^{ab}=(1-3\gamma H^{2}+3\gamma E^{2})\eta^{ab}$
\item Degenerate case: $g_{1}^{ab}=\eta^{ab}$
\end{itemize}
\end{itemize}

\subsubsection{Born-Infeld to second order}

For $b=-\frac{1}{2}$ and $\gamma \propto \beta ^{2}$, equation \eqref{2a} approximate
the BI Lagrangian to second order, so we recover birefringence.
\begin{itemize}
\item Non degenerate case

The condition for hyperbolicity  is,
\[
\alpha _{1}\alpha _{2}\beta _{1}\beta _{2}=\frac{16\left( 1+3\gamma
H^{2}\right) \left( 1-3\gamma E^{2}\right) }{\left( \left( 1+3\gamma
H^{2}\right) +\left( 1-3\gamma E^{2}\right) \right) ^{2}}-3>0
\]%
Then you can prove the last expression is positive  when 
\[
\frac{1}{3}<\frac{\left( 1+3\gamma H^{2}\right) }{\left( 1-3\gamma
E^{2}\right) }<3
\]

This implies non empty cones
intersection of background with effective metrics%
\[
\Omega _{1}=\frac{\left( 3\gamma H^{2}+3\gamma E^{2}+2\right) ^{2}}{\left(
3\gamma H^{2}-3\gamma E^{2}+2\right) ^{2}}\frac{\left( 3-\frac{\left(
1+3\gamma H^{2}\right) }{\left( 1-3\gamma E^{2}\right) }\right) }{\left( 1+%
\frac{\left( 1+3\gamma H^{2}\right) }{\left( 1-3\gamma E^{2}\right) }\right) 
}>0
\]%
Moreover%
\[
\gamma _{1}=\frac{3-\frac{\left( 1+3\gamma H^{2}\right) }{\left( 1-3\gamma
E^{2}\right) }}{1+\frac{\left( 1+3\gamma H^{2}\right) }{\left( 1-3\gamma
E^{2}\right) }}
\]%
In the symmetric hyperbolic range,

 If $0<\gamma \Rightarrow 1<\frac{\left( 1+3\gamma H^{2}\right) }{%
\left( 1-3\gamma E^{2}\right) }<3\Rightarrow 0<\gamma _{1}<1$ and super-luminal propagation is allowed

If $\gamma =0\Rightarrow $  the system reduce to Maxwell theory and
propagations coincide.

If $\gamma <0\Rightarrow \frac{1}{3}<\frac{\left( 1+3\gamma H^{2}\right) }{%
\left( 1-3\gamma E^{2}\right) }<1\Rightarrow 1<\gamma _{1}$ and propagation up to speed of light are allowed

\item Degenerate case%
\[
\begin{array}{ccc}
b_{1}=3\gamma % &  & b_{2}=3\gamma 
\end{array}%
\]

If $\gamma >0$ then $b_{1}>0$ and propagation's faster than light are
allowed, 

If $\gamma =0$ then $b_{1}=0$ and we recover Maxwell theory 

If $\gamma <0$ then $b_{1}<0$ and  propagation's up to speed of
light are allowed (propagation for the unique null direction of $F_{ab}$ will be to speed of light)
\end{itemize}

\subsubsection{Euler-Heisenberg}

For $b=\frac{1}{7}$ and $\gamma \propto \alpha $ the fine structure
constant, the Euler-Heisenberg theory  becomes the effective Lagrangian for
QED due to one loop corrections \cite{Heis} 

\begin{itemize}
\item  Non degenerate case
\[
\alpha _{1}\alpha _{2}\beta _{1}\beta _{2}=270\left( \frac{7-3\gamma E^{2}}{%
-6\left( 7+3\gamma H^{2}\right) -6\left( 7-3\gamma E^{2}\right) +63}\right)
^{2}\left( \frac{6}{5}-\frac{\left( 7+3\gamma H^{2}\right) }{\left(
7-3\gamma E^{2}\right) }\right) \left( \frac{\left( 7+3\gamma H^{2}\right) }{%
\left( 7-3\gamma E^{2}\right) }-\frac{5}{6}\right)>0 
\]

\bigskip Then the systems is symmetric hyperbolic when
\[
\frac{5}{6}<\frac{\left( 7+3\gamma H^{2}\right) }{\left( 7-3\gamma
E^{2}\right) }<\frac{6}{5} 
\]
\bigskip The analysis of the propagation velocity in this case will not be studied due to its difficulty.

\item Degenerate case%
\[
b_{1}=\frac{33}{14}\gamma +\frac{9}{14}\left\vert \gamma \right\vert 
\]

If $\gamma >0$ then $b_{1}>0$ and propagation's higher than light are allowed

If $\gamma =0$ we recovery Maxwell theory

If $\gamma <0$ then $b_{1}<0$ and  propagation up to speed of  light are
allowed.
\end{itemize}

\subsection{Euler-Heisenberg II}

We present another approach given in \cite{kleinert2013fractional}, it is a strong field approximation for the Euler-Heisenberg action, with the following Lagrangian 
\[
L=-\frac{1}{4}\kappa F\left\vert FG\right\vert ^{\frac{\delta }{2}}
\]

with $\kappa =E_{c}^{-2\delta }$, a critical field and $\delta =\frac{1}{3}\left( \frac{e^{2}}{%
4\pi \hslash c\epsilon _{0}}\right) $ for spinors QED and $\delta _{S}=\frac{%
\delta }{4}$ for scalar QED.

The degenerate case are not well define because some quantities blow up. 

For the non degenerate case, we define
\[
y_{\pm }=\frac{-\left( 6\delta ^{4}+44\delta ^{3}+92\delta ^{2}+64\delta
+16\right) \pm \sqrt{16\left( \delta +2\right) ^{3}\left( 2\delta +1\right)
^{2}\left( 6\delta ^{2}+5\delta +2\right) }}{2\left( 16\delta +28\delta
^{2}-9\delta ^{4}\right) }.
\]

Thus the system is symmetric hyperbolic when

If $\left( 16\delta +28\delta ^{2}-9\delta ^{4}\right) >0$%
\[
\frac{F^{2}}{G^{2}}<y_{-}\text{ \ \ o \ \ \ }y_{+}<\frac{F^{2}}{G^{2}}\text{ 
}
\]

If $\left( 16\delta +28\delta ^{2}-9\delta ^{4}\right) <0$%
\[
y_{-}<\frac{F^{2}}{G^{2}}<y_{+}
\]

\subsection{Toy model 2}

We consider an arbitrary Lagrangian as function only of $F$,
\[
\mathcal{L} =\mathcal{L} \left( F\right).
\]%
Then \bigskip $\xi _{2}=\xi _{3}=0$ and $b_{1}:=\left( \xi _{1}+\left\vert
\xi _{1}\right\vert \right) ,b_{2}=\left( \xi _{1}-\left\vert \xi
_{1}\right\vert \right) $ .

\begin{itemize}
\item Non degenerate case

We will see that the hyperbolicity condition is 
\[
-\frac{1}{2H^{2}}<\xi _{1}<\frac{1}{2E^{2}}
\]
\end{itemize}

\begin{enumerate}
\item \bigskip If $\xi _{1}\geq 0$%

The systems is symmetric hyperbolic if $0<\alpha _{1}\beta _{2}=1-2\xi
_{1}E^{2}$ then%
\[
0\leq \xi _{1}<\frac{1}{2E^{2}}
\]

In addition 
\[
\gamma _{1}=\frac{1-2\xi _{1}E^{2}}{1+2\xi _{1}H^{2}}
\]

If   $\xi _{1}=0$  (and $\gamma_{1}=1$) the theory behaves as Maxwell, and  the birefringence effect disappear, the propagation is given for the background cone.

If  $0<\xi _{1}<\frac{1}{2E^{2}}$ $\ $then $\gamma _{1}<1$ and propagation
faster than speed of light are allowed.

Moreover 
\[
\Omega _{1}=\left( 1-2\xi _{1}E^{2}\right) \left( 1+2\xi _{1}H^{2}\right) >0
\]
then the effective metric have cone intersection with background metric.

\item If $\xi _{1}<0$%

The systems is symmetric hyperbolic if $\ 0<\alpha _{1}\beta _{2}=1+2\xi
_{1}H^{2}$ 
\[
-\frac{1}{2H^{2}}<\xi _{1}<0,
\]
and cone propagation is given for background cone,
\[
\gamma _{1}=1 \ \ \ \Omega _{1}=1.
\]

\end{enumerate}

\begin{itemize}
\item Degenerate case

If $\xi _{1}>0$ then $b_{1}=2\xi_{1}>0$ and propagation faster than light are allowed

If $\xi _{1}<0$ then $b_{1}=0$ the background metrics is the causal cone.
\end{itemize}

The particular case 
$\mathcal{L} =e^{\frac{\psi }{2}F}$ with 
$\psi =\frac{3}{2}\gamma $ is similar to the previous Gauss-Bonnet theory. 
The hyperbolicity range coincide under the change 
$E\rightarrow H$ and $H\rightarrow E$.

\section{\protect\bigskip Conclusions}

Non-linear generalization of Maxwell's theory arise naturally in many circumstances, some of them
were discussed in the examples presented above. In most cases of interest they are generated from a variational  principle involving a Lagrangian function of the two Lorentz invariant scalars one can form from Maxwell's tensor. It is well known that all these theories have a dispersion relation (set of hyper-planes allowing solutions which are constant along them) determined by the null vectors of two effective conformal class of Lorentzian metrics, which we call ${\mathfrak{g}}_1^{ab}$ and ${\mathfrak{g}}_{2}^{ab}$. These cones share a very important property, generically they have a pair of null directions in common with the background metric, thus a preferred two plane, containing these two vectors is preferred too (in some degenerate (non-generic) cases these two directions collapse into a single one).
Being null cones of conformal Lorentzian metrics we can define their propagation cones, namely the cone of all vectors whose norms with respect to each one of these metrics are positive definite.
The shared null direction property implies that either the propagation cones are nested, one inside the other, or they are not, having no vector in common. 

A first question one might ask when one is face with any one of such generalization it is whether
they are well posed, namely whether the solutions are continuous functions of their initial data.
Without this requirement the theories are powerless, they don't have any predictive power.
Using the covariant approach to symmetric hyperbolicity introduced by Geroch
we developed simple criteria for non-linear Electrodynamics theories arising from
arbitrary Lagrangians to shield well posed system of evolution equations.
%This property is a basic requeriment for any theory pretending to be of physical relevance, for it implies the predictive power of the theory.

The criteria we found are of two kind, and of course equivalent among each other.
One of them is geometric in nature, it says that whenever the  propagation cones of the two conformal effective metrics have a non-empty intersection the theories are symmetric hyperbolic.
There are here two cases, in one of them the conformal metrics propagation cones have also intersection with the that of the background relativistic metric, in the other not.
As can be seen from the examples, Born-Infeld, Gauss-Bonnet, Euler-Heisenberg and toy model II have cone intersection with background metric, and only the toy model I doesn't have.
In this second case the propagation has the particularity that the propagation is along space-like directions with respect to the background metric, nevertheless, taken the theory in its own, it is perfectly causal and has an initial value formulation in which data must be given along a time-like (with respect to the background metric) hyper-surface. In addition, in both cases one of the propagation cones is always inside the other so causality is always given by the cone of the metric with the largest  propagation cone, which according to our definitions is always  ${\mathfrak{g}}^1_{ab}$. There are cases where this propagation cone strictly contains the background metric cone and so we have propagation speeds larger than "light" namely larger than those allowed at the background propagation cone. In other cases the smaller cone is strictly inside the background cone and we have relativistic causality. 

The other criteria are algebraic, the cones non-empty intersection is equivalent to the positivity of the scalar expression $\alpha_{1}\beta _{2}$ which can be explicitly computed from the Lagrangian function and its derivatives and depends only on the values of the two Lorentz invariant quantities.
Whether the biggest of the metrics propagation cones is contained or contains the background metric cone depends on whether the quantity $\gamma_1$ is  bigger  or smaller  that one for non degenerate cases, and on $b_1$ is negative or positive for degenerate cases.
The algebraic criteria are very easy to check on actual examples. 
It is important to realize that whether a theory is hyperbolic or not might depend on the field strength of it, (the values the invariant scalars take), and so along evolution a perfectly nice solution might cease to be well posed. Using the algebraic conditions one could characterize the set of theories where hyperbolicity holds for all values of the field. That set of theories should be preferred.

The proof of the above mentioned criteria involves the explicit construction of hyperbolizers, thus, when trying to evolve the equations of these theories we provide a set of evolution equations which can be safely used. The propagation cones of these symmetrizers coincide with the smaller one of the effective metrics and so we recuperate for them the same causality properties as discussed above.
Thus these hyperbolizations are optimal, they allow for all possible hyper-surfaces where initial data can be given.

We also assert that the constraints arising in the theories are in all cases integrable (in the sense of Geroch). 
In some sense this is so because the constraint structure is close to the one of Maxwell. This means that in all cases, if they are satisfied for initial data, and the system is hyperbolic, they remain satisfied along evolution as long as the system remains hyperbolic and inside the corresponding domain of dependence.

Notice that at no point we use energy conditions this is so because hyperbolicity depends on second derivative conditions of the Lagrangian, while energy conditions involve only first derivatives. To some extent energy conditions are related (through energy momentum conservation) to causality conditions. Their imposition probably would prevent those cases where the effective cones have no intersection with the space-time cone. But in those cases one probably would require a different energy condition.

$\bigskip $

%%%%%%%%%%%%%%%%%%%%%%%%%%%%%%%%%%%%%%%%%%%%%%%%%%%%%%%%%%%%%%%%%%%%%%%%%%%%%%%%%%%%%%%%%%%%%%%%%%%%%%%%%%%%%%%%%%%%%%
%%%%%%%%%%%%%%%%%%%%%%%%%%%%%%%%%%%%%%%%%%%%%%%%%%%%%%%%%%%%%%%%%%%%%%%%%%%%%%%%%%%%%%%%%%%%%%%%%%%%%%%%%%% APPENDICES 
%%%%%%%%%%%%%%%%%%%%%%%%%%%%%%%%%%%%%%%%%%%%%%%%%%%%%%%%%%%%%%%%%%%%%%%%%%%%%%%%%%%%%%%%%%%%%%%%%%%%%%%%%%%%%%%%%%%%%%

\appendix

\section{Frames}

In order to facilitate calculations we work in a family of preferred frames, to built it we use the principal null directions of $F_{ab}$. To display it we use spinors \cite{penrose, pirani, wald2010general}, but this is not essential.

We know that in the spinorial form the electromagnetic field is%
\begin{equation}
F_{ab}\rightarrow \epsilon _{AB}\bar{\phi}_{A^{\prime }B^{\prime }}+\epsilon
_{A^{\prime }B^{\prime }}\phi _{AB}  \label{Fab_spinors}
\end{equation}
with $\phi _{AB}=\phi _{\left( AB\right) }$ and $\epsilon _{AB}=\left( 
\begin{array}{cc}
0 & 1 \\ 
-1 & 0%
\end{array}%
\right) $

\subsection{Non-degenerate case}

In the non-degenerate case, $\phi _{AB}$ is characterized by two different principal null directions,%
\[
\phi _{AB}=\phi ~\theta _{(A} \mu _{B)} 
\]
where $\theta _{A}\mu ^{A}=1~$ and $\phi $ is a complex normalization factor.\\
Using these spinors we can construct a (complex) null tetrad,%
\begin{equation}
\begin{array}{cccc}
k^{a}\rightarrow \theta _{A}\bar{\theta}_{A^{\prime }} \ \ & l^{a}\rightarrow
\mu _{A}\bar{\mu}_{A^{\prime }} \ \ & m^{a}\rightarrow \theta _{A}\bar{\mu}%
_{A^{\prime }} \ \ & \bar{m}^{a}\rightarrow \bar{\theta}_{A^{\prime }}\mu _{A}%
\end{array}
\label{nulltetrade1}
\end{equation}%
where $k^{a}\eta _{ab}l^{b}=1,~m^{a}\eta _{ab}\bar{m}^{a}=-1$ and all other contractions vanish. 
%For its $k^{a}$, $l^{a}$ are real and $m^{a}$, $\bar m^{a}$ complex. 

We observe they are eigenvectors of the electromagnetic field, namely:
\[
\begin{array}{cccc}
F_{~b}^{a}k^{b}=Re\left( \phi \right) k^{a} \ \ \ \ & F_{~b}^{a}l^{b}=-Re \left( \phi \right) l^{a} \ \ \ \ & F_{~b}^{a}m^{b}=iIm\left( \phi \right)
m^{a} \ \ \ \ & F_{~b}^{a}\bar{m}^{b}=-i Im \left( \phi \right) \bar{m}^{a}%
\end{array}%
\]
So by noticing that $F=-(\phi^{2}+\bar \phi^{2})$ and $G=i(\phi^{2}-\bar \phi^{2})$, it follows that 
\[
\begin{array}{cc}
Re\left( \phi \right) =\pm \sqrt{\frac{-F+\sqrt{F^{2}+G^{2}}}{4}} & 
\ \ Im \left( \phi ~\right) =\pm \sqrt{\frac{F+\sqrt{F^{2}+G^{2}}}{4}}%
\end{array}%
\]
Notice $ G=Re\left( \phi \right) Im \left( \phi ~\right) $, and so the signs of $Re\left( \phi \right) $ and $Im \left( \phi
~\right) $ must be properly fixed according to the sign of $G$. Finally, we build our (real) tetrad as
\begin{equation}
\begin{array}{cc}
\tau ^{a}=\frac{1}{\sqrt{2}}\left( k^{a}+l^{a}\right) &  z^{a}=\frac{1}{\sqrt{2}}\left( k^{a}-l^{a}\right) \\ 
x^{a}=\frac{1}{\sqrt{2}}\left( m^{a}+\bar{m}^{a}\right) & y^{a}=\frac{i}{\sqrt{2}}\left( m^{a}-\bar{m}^{a}\right)%
\end{array}
\label{coord_sistem1}
\end{equation}
In this frame the background metric is $\eta _{ab}=Diag\left(1,-1,-1,-1\right) $; $\tau ^{a}$ is time-like while $x^{a},y^{a},z^{a}$ are
space-like vectors. The electromagnetic field and its dual read,
\begin{eqnarray}
F_{ij} &=&\left( 
\begin{array}{cccc}
0 & -E_{1} & -E_{2} & -E_{3} \\ 
E_{1} & 0 & H_{3} & -H_{2} \\ 
E_{2} & -H_{3} & 0 & H_{1} \\ 
E_{3} & H_{2} & -H_{1} & 0%
\end{array}%
\right) =\left( 
\begin{array}{cccc}
0 & 0 & 0 & -Re\left( \phi \right) \\ 
0 & 0 & Im \left( \phi \right) & 0 \\ 
0 & - Im \left( \phi \right) & 0 & 0 \\ 
Re\left( \phi \right) & 0 & 0 & 0%
\end{array}%
\right)  \label{Fab1} \\
F_{ij}^{\ast } &=&\left( 
\begin{array}{cccc}
0 & H_{1} & H_{2} & H_{3} \\ 
-H_{1} & 0 & E_{3} & -E_{2} \\ 
-H_{2} & -E_{3} & 0 & E_{1} \\ 
-H_{3} & E_{2} & -E_{1} & 0%
\end{array}%
\right) =\left( 
\begin{array}{cccc}
0 & 0 & 0 & Im \left( \phi \right) \\ 
0 & 0 & Re\left( \phi \right) & 0 \\ 
0 & -Re\left( \phi \right) & 0 & 0 \\ 
- Im \left( \phi \right) & 0 & 0 & 0%
\end{array}%
\right)  \nonumber
\end{eqnarray}
where it follows that $\vec{E}$ and $\vec{H}$ are parallel, both lying on the $z^{a}$ direction. 
And one can immediately see how calculations will become simpler by adopting this frame.

\subsubsection{Boost freedom}

Since the extend of the two principal null directions (or corresponding spinors) is arbitrary, there is a freedom on the frame choice which does not alter the above form of $F_{ab}$. 
If one considers the transformation $\theta _{A}\rightarrow \frac{1}{\varepsilon }\theta _{A}$ and $\mu ^{A}\rightarrow \varepsilon \mu ^{A}$, 
it can be notice that $\phi _{AB}$, and therefore $F_{ab}$, remains unchanged.
Thus, if we choose $\varepsilon \,\ $real, the null tetrad \eqref{nulltetrade1} changes to 
\begin{equation}
\begin{array}{cccc}
\hat{k}^{a}\rightarrow \frac{1}{\varepsilon ^{2}}\theta _{A}\bar{\theta}%
_{A^{\prime }} & \hat{l}^{a}\rightarrow \varepsilon ^{2}\mu _{A}\bar{\mu}%
_{A^{\prime }} & m^{a}\rightarrow \theta _{A}\bar{\mu}_{A^{\prime }} & \bar{m%
}\rightarrow \bar{\theta}_{A^{\prime }}\mu _{A}%
\end{array}
\label{null_tetrad_2}
\end{equation}
and in the new frame 
\[
\begin{array}{cc}
\hat{\tau}^{a}=\frac{1}{\sqrt{2}}\left( \hat{k}^{a}+\hat{l}^{a}\right) & \hat{z}^{a}=\frac{1}{\sqrt{2}}\left( \hat{k}^{a}-\hat{l}^{a}\right) \\ 
x^{a}=\frac{1}{\sqrt{2}}\left( m^{a}+\bar{m}^{a}\right) & y^{a}=\frac{i}{\sqrt{2}}\left(m^{a}-\bar{m}^{a}\right)
\end{array}%
\]
So this freedom corresponds to a boost in the null plane defined by the two null directions.
The choice of $\varepsilon$ imaginary causes a rotation of the frame components perpendicular to the null plane form by the two null directions.
Both, the metric $\eta _{ab}~$ and Maxwell tensor components remain invariant.

\subsection{Degenerated case}

As before, the electromagnetic tensor takes the spinorial form \eqref{Fab_spinors}, but now there is just a single null direction associated to it:
\begin{equation}
\phi _{AB}=~\theta _{A}\theta _{B}
\end{equation}
To complete the frame then, we shall choose an arbitrary null direction $\mu_{A} $, such that $\theta _{A}\mu ^{A}=1$ 
and we proceed building a null tetrad like in \eqref{null_tetrad_2}:
\[
\begin{array}{cccc}
k^{a}\rightarrow \frac{1}{\varepsilon ^{2}}\theta _{A}\bar{\theta}%
_{A^{\prime }} & l^{a}\rightarrow \varepsilon ^{2}\mu _{A}\bar{\mu}%
_{A^{\prime }} & m^{a}\rightarrow \theta _{A}\bar{\mu}_{A^{\prime }} & \bar{m%
}\rightarrow \bar{\theta}_{A^{\prime }}\mu _{A}%
\end{array}%
\]
where we set $k^{a}\eta _{ab}l^{b}=1,~m^{a}\eta _{ab}\bar{m}^{a}=-1$, and in analogy to the previous case any other contraction vanish. 
With $\varepsilon$ being a real positive parameter that we can freely pick up
\footnote{Notice we strongly rely on this freedom in Section II-C.2, particularly on proving lemma \ref{free-eps}.}.

It follows that, 
\begin{equation}
 F_{ab}k^{b}=0 \quad \text{;} \quad  F_{ab}l^{b}=-\varepsilon ^{2}\left( m_{a}+\bar{m}_{a}\right) \quad \text{;} \quad 
 F_{ab}m^{b}=-\varepsilon ^{2}k_{a}  \quad \text{;} \quad F_{ab}\bar{m}^{b}=-\varepsilon^{2}k_{a} \nonumber
\end{equation}
Now we build the frames as in \eqref{coord_sistem1}, and get the following expressions for $F_{ab}$ and $F_{ab}^{\ast }$
\begin{eqnarray*}
F_{ij} &=&\left( 
\begin{array}{cccc}
0 & -E_{1} & -E_{2} & -E_{3} \\ 
E_{1} & 0 & H_{3} & -H_{2} \\ 
E_{2} & -H_{3} & 0 & H_{1} \\ 
E_{3} & H_{2} & -H_{1} & 0%
\end{array}%
\right) =\left( 
\begin{array}{cccc}
0 & -\varepsilon ^{2} & 0 & 0 \\ 
\varepsilon ^{2} & 0 & 0 & -\varepsilon ^{2} \\ 
0 & 0 & 0 & 0 \\ 
0 & \varepsilon ^{2} & 0 & 0%
\end{array}%
\right) \\
F_{ij}^{\ast } &=&\left( 
\begin{array}{cccc}
0 & H_{1} & H_{2} & H_{3} \\ 
-H_{1} & 0 & E_{3} & -E_{2} \\ 
-H_{2} & -E_{3} & 0 & E_{1} \\ 
-H_{3} & E_{2} & -E_{1} & 0%
\end{array}%
\right) =\left( 
\begin{array}{cccc}
0 & 0 & \varepsilon ^{2} & 0 \\ 
0 & 0 & 0 & 0 \\ 
-\varepsilon ^{2} & 0 & 0 & \varepsilon ^{2} \\ 
0 & 0 & -\varepsilon ^{2} & 0%
\end{array}%
\right)
\end{eqnarray*}
Thus the vectors $\vec{E}=\left( \varepsilon ^{2},0,0\right) $ and $\vec{H}=\left( 0,\varepsilon ^{2},0\right) $ are orthogonal each other (i.e: $G=0$)
and with equal norms (i.e: $F=0$), as they should.

\section{Degenerate effective metrics}

%\subsection{Non-degenete $F_{ab}$}
In this appendix we want to show that if one of the two effective metrics becomes degenerate (non-invertible), then the system is not symmetric hyperbolic.
Such degenerate cases will occur whenever one of the variables in \eqref{alfa_beta1} becomes zero. From equation \eqref{lemma:alg} we see that only the cases $\alpha_{1} = 0$ or $\beta_{2} = 0$ are allowed.
When $\alpha_{1}=0$, say, the metric $g_{1}^{ab}$ will no longer be invertible. 
However, we stress here that the determinant from expression \eqref{det} still factorizes into four  metrics 
(${{\mathfrak{g}}}_{1}^{ab}$, $\mathfrak{g}_{2}^{ab}$, $\tilde{\mathfrak{g}}^{1}_{ab}$, $\mathfrak{g}^{2}_{ab}$),
where now $\tilde{\mathfrak{g}}^{1}_{ab}t^{a}t^{b} = \beta_{1}\left( t_{0}^{2}-t_{3}^{2}\right)$ is obviously not the inverse of ${{\mathfrak{g}}}_{1}^{ab}$.
When $\beta_2=0$, the determinant factorizes into four metrics, 
(${{\mathfrak{g}}}_{1}^{ab}$, $\mathfrak{g}_{2}^{ab}$, ${\mathfrak{g}}^{1}_{ab}$, $\tilde{\mathfrak{g}}^{2}_{ab}$), where 
$\tilde{\mathfrak{g}}^{2}_{ab}t^{a}t^{b} = -\alpha_2 \left( t_{1}^{2}+t_{2}^{2}\right)$

\begin{figure}[!ht]
\centering
    \subfloat[$\alpha_{1}=0$ and $\alpha_{2}, \beta_1, \beta_{2} \neq 0$  \label{:dummy2}]{%
      \includegraphics[width=0.25\textwidth]{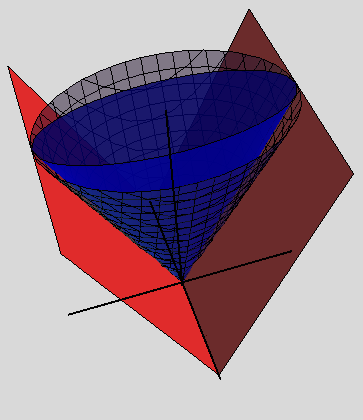}
    } \hspace{3 cm} 
    \subfloat[$\beta_{2}=0$ and $\alpha_{1}, \alpha_{2}, \beta_{1} \neq 0$ \label{subfig-2:dummy2}]{%
      \includegraphics[width=0.25\textwidth]{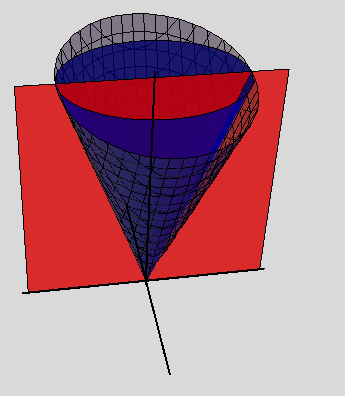}
    }
    \caption{Null surfaces of the metrics are illustrated: $\tilde{\mathfrak{g}}_{1ab}$ (red);  $ \mathfrak{g}_{2ab} $ (blue); $ \eta_{ab}$ (grey).}
    \label{fig:dummy2}
  \end{figure}

We first analyze in detail the case $\alpha_{1}=0$.
Suppose for contradiction there exist $n_{a}^{o}$ and $t_{o}^{b}$ such that the matrix $T_{~\beta}^{\alpha} \left(n^{o}, t_{o}\right) $ 
has all its eigenvalues positive, and let us generically write $n_{a}^{o}=\left( n_{0},n_{1},n_{2},n_{3}\right) $. 
Then, by following similar arguments to those used on the proofs of the main theorems, we will reach a contradiction.
The construction goes as follows: we first show it is possible to continuously connect  $n_{a}^{o}$ with a second
covector, $\hat{n}_{a}^{o}=\left( n_{0},-n_{1},-n_{2},n_{3}\right) $ without changing along the path the sign of the eigenvalues.  
The path we propose is, \\
$\left(
n_{0},n_{1},n_{2},n_{3}\right) \rightarrow \left( n_{0},\epsilon n_{1},\epsilon n_{2},n_{3}\right) \rightarrow \left( n_{0},\epsilon n_{1},0,n_{3}\right) \rightarrow \left(n_{0},\epsilon n_{1},-\epsilon n_{2},n_{3}\right) 
\rightarrow \left( n_{0},0,-\epsilon n_{2},n_{3}\right) \\
\rightarrow \left( n_{0},-\epsilon n_{1},-\epsilon n_{2},n_{3}\right)
\rightarrow \left( n_{0},-n_{1},-n_{2},n_{3}\right) $,
for some positive but small parameter $\epsilon$.\\
Then, as argued at the beginning  of section IV, since both $H_{\alpha \beta}  \left(n^{o}, t_{o}\right)$ and  $H_{\alpha \beta}  \left(\hat{n}^{o}, t_{o}\right)$ will be positive definite, it turns that $H_{\alpha \beta}  \left(n^{o} + \hat{n}^{o}, t_{o}\right)$ must be  be positive definite as well. 
But, a simple computation shows   
$n_{a}^{o}+\hat{n}_{a}^{o}=2\left(n_{0},0,0,n_{3}\right)$, 
which should also be in the cone, is null for ${\mathfrak{g}}_{1}^{ab}$, and so one of the eigenvalues must be zero,leading us to a contradiction. 
Hence, to conclude no such pair  ($n_{a}^{o}$, $t_{o}^{b}$) can exist. 

The exact same construction (but now with $t^{a}$) can be applied when  $\beta_{2} = 0$ where one of the effective metrics in their covariant version degenerates. Thus, the system is not symmetric hyperbolic for any of these two "pathological" cases. However the last case might still be strongly hyperbolic, we shall study this case in a future work.

%%%%%%%%%%%%%%%%%%%%%%%%%%%%%%%%%%%%%%%%%%%%%%%%%%%%%%%%%%%%%%%%%%%%%%%%%%%%%%%%%%%%%%%%%%%%%%%%%%%%%%%%%%%%%%
%%%%%%%%%%%%%%%%%%%%%%%%%%%%%%%%%%%%%%%%%%%%%%%%%%%%%%%%%%%%%%%%%%%%%%%%%%%%%%%%%%%%%%%%%%%%%%%%%% REFERENCES
%%%%%%%%%%%%%%%%%%%%%%%%%%%%%%%%%%%%%%%%%%%%%%%%%%%%%%%%%%%%%%%%%%%%%%%%%%%%%%%%%%%%%%%%%%%%%%%%%%%%%%%%%%%%%%

\bibliographystyle{unsrtnat} %plain
\bibliography{paper2}

\end{document}